\newcommand\reallywidehat[1]{%
\savestack{\tmpbox}{\stretchto{%
  \scaleto{%
      \scalerel*[\widthof{\ensuremath{#1}}]{\kern-.6pt\bigwedge\kern-.6pt}%
          {\rule[-\textheight/2]{1ex}{\textheight}}
            }{\textheight}%
            }{0.5ex}}%
            \stackon[1pt]{#1}{\tmpbox}%
            }
\newcommand{\bool}{\set{0,1}}
\newcommand{\Nbb}{\mathbb{N}}
\newcommand{\Rbb}{\mathbb{R}}
\newcommand{\rb}{\mathbf{r}}
\newcommand{\tb}{\mathbf{t}}
\newcommand{\Ac}{\mathcal{A}}
\newcommand{\Bc}{\mathcal{B}}
\newcommand{\Cc}{\mathcal{C}}
\newcommand{\Dc}{\mathcal{D}}
\newcommand{\Ec}{\mathcal{E}}
\newcommand{\Sc}{\mathcal{S}}
\newcommand{\wt}[1]{\widetilde{#1}}
\newcommand{\eps}{\varepsilon}
\newcommand{\bO}[1]{\operatorname*{O}\paren*{#1}}
\newcommand{\bOt}[1]{\operatorname*{\wt{O}}\paren*{#1}}
\newcommand{\bOm}[1]{\operatorname*{\Omega}\paren*{#1}}
\newcommand{\bT}[1]{\operatorname*{\Theta}\paren*{#1}}
\newcommand{\plog}{\operatorname*{\normalfont{\text{polylog}}}}
\DeclareMathOperator{\poly}{poly}
\DeclarePairedDelimiter\floor{\lfloor}{\rfloor}
\DeclarePairedDelimiter\ceil{\lceil}{\rceil}
\DeclarePairedDelimiter\abs{|}{|}
\DeclarePairedDelimiter\brac{\lbrack}{\rbrack}
\DeclarePairedDelimiter\set{\lbrace}{\rbrace}
\DeclarePairedDelimiter\paren{\lparen}{\rparen}
\DeclarePairedDelimiter\interval{\lbrack}{\rparen}
\DeclarePairedDelimiter\bra{\langle}{|}
\DeclarePairedDelimiter\ket{|}{\rangle}
\DeclarePairedDelimiterX\braket[2]{\langle}{\rangle}{#1\delimsize\vert#2}
\newtheorem{theorem}{Theorem}
\newtheorem{lemma}[theorem]{Lemma}
\theoremstyle{definition}
\newtheorem{definition}[theorem]{Definition}
\theoremstyle{remark}
\newcommand{\nth}{^\text{th}}
\newcommand{\din}{d^\text{in}}
\newcommand{\dout}{d^\text{out}}
\newcommand{\dedge}[1]{\overrightarrow{#1}}
\newcommand{\hist}[1]{\mathsf{Snap}^{#1}}
\newcommand{\histps}[1]{\mathsf{PsSnap}^{#1}}
\newcommand{\db}[1]{d^{\leq#1}}
\newcommand{\dbps}[1]{\wt{d}^{\leq#1}}
\newcommand{\da}[1]{d^{>#1}}
\newcommand{\doutb}[1]{d^{\text{out},\leq#1}}
\newcommand{\doutbps}[1]{\wt{d}^{\text{out},\leq#1}}
\newcommand{\douta}[1]{d^{\text{out},>#1}}
\newcommand{\bps}[1]{\wt{b}^{#1}}
\newcommand{\inc}{\mathtt{inc}}
\newcommand{\measure}{\mathtt{measure}}
\newcommand{\cleanup}{\mathtt{cleanup}}
\newcommand{\accuracy}{\kappa}
\newcommand{\prefix}{\rho}
\newcommand{\mcut}{{\normalfont\textsc{Max-Cut}}}
\newcommand{\mdicut}{{\normalfont\textsc{Max-DiCut}}}
\newcommand{\mdcut}{\mdicut}
\newcommand{\mtwoAnd}{{\normalfont\textsc{Max-2AND}}}
	\gdef\commentmark{%
		\expandafter\ifx\csname @mpargs\endcsname\relax 
		\expandafter\ifx\csname @captype\endcsname\relax 
		\marginpar{comment}
		\else
		comment 
		\fi
		\else
		comment 
		\fi}
	\gdef\comment{\@ifnextchar[\comment@lab\comment@nolab}
	\long\gdef\comment@lab[#1]#2{{\bf [\commentmark #2 ---{\sc #1}]}}
	\long\gdef\comment@nolab#1{{\bf [\commentmark #1]}}
\title{Exponential Quantum Space Advantage for Approximating Maximum Directed
Cut in the Streaming Model}
\date{}
\author{John Kallaugher\\Sandia National
Laboratories\\\texttt{jmkall@sandia.gov} \and Ojas Parekh\\Sandia National
Laboratories\\\texttt{odparek@sandia.gov} \and Nadezhda
Voronova\\ Boston University\\\texttt{voronova@bu.edu}}
\begin{document}
\maketitle

\begin{abstract}
\noindent
While the search for quantum advantage typically focuses on speedups in
execution time, quantum algorithms also offer the potential for advantage in
\emph{space} complexity. Previous work has shown such advantages for data
stream problems, in which elements arrive and must be processed sequentially
without random access, but these have been restricted to specially-constructed
problems \lbrack Le Gall, SPAA `06\rbrack{} or polynomial advantage
\lbrack Kallaugher, FOCS `21\rbrack.  We show an \emph{exponential} quantum
space advantage for the maximum directed cut problem.  This is the first known
exponential quantum space advantage for any natural streaming problem. This
also constitutes the first unconditional exponential quantum resource advantage
for approximating a discrete optimization problem in any setting.

Our quantum streaming algorithm $0.4844$-approximates the value of the largest
directed cut in a graph stream with $n$ vertices using $\plog(n)$ space, while
previous work by Chou, Golovnev, and Velusamy \lbrack FOCS '20\rbrack{} implies
that obtaining an approximation ratio better than $4/9 \approx 0.4444$ requires
$\bOm{\sqrt{n}}$ space for any classical streaming algorithm. Our result is
based on a recent $\bOt{\sqrt{n}}$ space classical streaming approach by
Saxena, Singer, Sudan, and Velusamy \lbrack FOCS '23\rbrack, with an additional
improvement in the approximation ratio due to recent work by Singer \lbrack
APPROX '23\rbrack. 
\end{abstract}

\section{Introduction}
Streaming algorithms are a means of processing very large data sets, in
particular those too large to be stored wholly in memory.  In this setting data
elements arrive sequentially, and a streaming algorithm must process elements
as they arrive using as little space as possible---ideally logarithmic in the
size of the data.  Streaming algorithms have been developed for many
applications, including computing statistics of data streams and estimating
graph parameters~\cite{FM85,AMS96,BKS02}. The practical use of such algorithms
goes back to the approximate counting algorithm of Morris~\cite{Mor78,Fla85}
and has expanded as the growth of the internet has increased the prevalence of
extremely large datasets~\cite{RLSLT13}.

Quantum computing offers the prospect of exponential resource advantages over
conventional classical computers.  The primary resource of interest has
traditionally been execution time, and while a handful of examples such as
Shor's celebrated algorithm for integer factorization~\cite{S99} provide
exponential speedups over the best-known classical counterparts, provable
exponential speedups over the best-possible classical algorithms remain largely
elusive. Although speedups are important, space is an especially critical
resource for quantum computing, as scalable fault-tolerant qubits are and will
likely continue to be scarce.  Quantum streaming algorithms offer a natural
avenue for exploring space-efficient quantum algorithms.        

Moreover, space-efficient quantum algorithms, including streaming algorithms,
offer an alternative opportunity for quantum advantage. Very large datasets
continue to be prevalent in computing, and so an algorithm with quantum memory
can potentially process much larger datasets than one with only classical
memory, provided the problem in question evinces a large enough quantum
advantage to justify the much higher cost of qubits relative to classical bits.

Provable exponential space advantages for quantum algorithms in the streaming
setting have been known since the seminal work of Gavinsky, Kempe, Kerenidis,
Raz, and de Wolf~\cite{GKKRW07}; however, the problem studied was constructed
for the purpose of proving this separation, leaving open the question of
whether such advantages exist for problems of independent classical interest.
The question of quantum advantage for a ``natural'' streaming problem was
suggested by Jain and Nayak~\cite{JN14}, who proposed a candidate problem of
recognizing the Dyck(2) formal language, related to strings of balanced
parentheses.  It remains open whether quantum advantage is possible for this
problem~\cite{NT17}.  

This question of a quantum advantage for a natural streaming problem was
recently resolved in~\cite{K21}, which demonstrates quantum advantage for the
problem of counting triangles in graph streams.  This problem has been long
studied classically and drives numerous applications in analyzing social
networks~\cite{BKS02,BOV13,JK21}. However, the advantage offered is only
\emph{polynomial} in the input size, and requires additional parametrization of
the input (the latter being unavoidable with the triangle counting problem).
We give the first \emph{exponential} quantum space advantage for a streaming
problem of independent interest.

\paragraph{Our Results} We give a $\plog(n)$-space algorithm that gives a
$0.4844$-approximation for the maximum directed cut (\mdcut) problem. Given a
directed graph $G = (V,E)$ (as a sequence of edges), $\mdcut(G)$ is the
greatest number of edges that can be simultaneously ``cut'' by a partition $V =
V_0 \sqcup V_1$, where $\dedge{uv}$ is cut iff $u \in V_0, v\in V_1$. In the
classical streaming setting, Chou, Golovnev, and Velusamy~\cite{CGV20} showed that any
approximation better than $4/9$ requires $\bOm{\sqrt{n}}$ space.

\begin{restatable}{theorem}{improvedmdcutalg}
\label{cor:improved_mdcutalg}
There is a quantum streaming algorithm which $0.4844$-approximates the $\mdcut$
value of an input graph $G$ with probability $1 - \delta$. The algorithm uses
$\bO{\log^5 n \log \frac{1}{\delta}}$ qubits of space.
\end{restatable}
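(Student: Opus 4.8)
The plan is to take the classical $\bOt{\sqrt{n}}$-space streaming algorithm of Saxena, Singer, Sudan and Velusamy, raise its approximation ratio to $0.4844$ using the refined rounding analysis of Singer, and then observe that its only ingredient costing more than $\plog(n)$ classical space --- a sampling step that records degree information for many vertices --- can be simulated by a $\plog(n)$-qubit quantum computation. The quantum saving is purely in space, and comes from a single idea: a classical table of per-vertex information for $\bOt{\sqrt{n}}$ (indeed, for all $n$) vertices is replaced by a uniform superposition over vertices carrying that information, which costs only $\bO{\log n}$ qubits plus the per-vertex size.

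First I would isolate the structure of the classical estimator. It computes a value of the form ``cheap global statistics'' --- in/out degree sums, a coarse histogram of the vertex biases $b_v \approx (\dout(v) - \din(v))/(\dout(v) + \din(v))$, and the like, all maintainable in $\plog(n)$ classical space --- plus a ``correction term'' which in expectation is a sum over the edges of a bounded function of the biases of the two endpoints, and which is estimated by a sampling procedure. The reason that procedure costs $\bOt{\sqrt{n}}$ is that the bias of a vertex is determined by edges scattered across the whole stream, so to evaluate the correction term one must already hold a \emph{snapshot} --- a $\plog(n)$-bit record of the incident-edge history at the relevant degree scales --- for the vertices involved; since one cannot predict which vertices these will be, the classical algorithm hedges with a birthday-paradox two-level sampling scheme over $\bOt{\sqrt{n}}$ tracked vertices.

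The quantum step keeps a register in uniform superposition over the $n$ vertices, with an attached snapshot subregister that is updated coherently as each stream element arrives; each update is a simple invertible classical operation, hence a unitary, so the entire stream is processed --- in one coherent pass --- into a $\plog(n)$-qubit state holding endpoint-bias information for a ``random'' sampled configuration. This is where the exponential space saving lives: the classical object being replaced is $\bOt{\sqrt{n}}$ bits, while a uniform superposition over $n$ indices together with their $\plog(n)$-bit snapshots is only $\bO{\log n} + \plog(n)$ qubits. I would then read the correction term off this state as a bounded random variable with the correct expectation, so that one execution produces an unbiased estimate; because the additive accuracy required is a small constant, a median-of-means over $\bO{\log\frac{1}{\delta}}$ parallel copies (all processing the stream simultaneously) brings the success probability to $1 - \delta$ while keeping the algorithm to a single pass. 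Accounting for the $\bO{\log n}$-qubit sample index, the snapshot register across $\bO{\log n}$ degree scales, and the $\plog(n)$ ancillae needed to evaluate the estimator yields the stated $\bO{\log^5 n \log\frac{1}{\delta}}$ qubits.

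The main obstacle I expect is making the sampling procedure \emph{coherent and single-pass}. The SSSV algorithm makes data-dependent branching decisions --- for instance, it treats a vertex differently once it is recognized as heavy --- and in superposition every such branch must be realized as a controlled unitary acting on all branches simultaneously, with all scratch data uncomputed; moreover the true snapshot cannot be stored in small space even quantumly, so one must design a pseudo-snapshot, prove its update unitary and cheap, and prove that the resulting estimator still has the right mean and a controlled variance. The subtlest point is that the SSSV estimator inherently couples the two endpoints of an edge, so the quantum algorithm must combine endpoint information within a single forward pass without incurring the $\Theta(1/\sqrt{n})$-amplitude bottleneck that would otherwise force a re-read of the stream; this is exactly where the particular structure of the SSSV construction --- and the fact that only a constant-factor approximation is needed --- has to be exploited. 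Verifying that nothing in the classical analysis secretly relied on having $\bOt{\sqrt{n}}$ genuinely independent samples, rather than one coherent superposition over them, is the heart of the argument, and is what makes the $\bOt{\sqrt{n}} \to \plog(n)$ replacement lossless up to the chosen accuracy, so that the $0.4844$ bound carries over to $\plog(n)$ qubits.
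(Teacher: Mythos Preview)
Your high-level decomposition matches the paper's --- reduce to (pseudo)snapshot estimation, replace the $\bOt{\sqrt{n}}$ classical sampling by a quantum primitive, then parallel-repeat and take medians --- but the plan leaves unresolved precisely the step you yourself flag as ``the main obstacle'': extracting a \emph{two-endpoint} quantity from a \emph{single-vertex} superposition in one pass. Your proposed state $\frac{1}{\sqrt{n}}\sum_v \ket{v}\ket{\text{snapshot}_v}$, maintained by coherent updates and ``read off'' at the end, yields information about one vertex per measurement; the snapshot, however, counts edges $\dedge{uv}$ with $u$ in one bias class and $v$ in another, which requires correlated pairs. Sampling an edge classically and then querying the superposition for its endpoints does not work (you cannot address a named index without already having amplitude concentrated there), and two independent copies of the state give an edge only with probability $\bO{m/n^2}$. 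So the plan, as written, does not escape the $\Theta(1/\sqrt{n})$ amplitude bottleneck you identify; it names the difficulty but supplies no mechanism to overcome it.

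The paper's resolution is not coherent lookup at all but a Boolean-Hidden-Matching-style \emph{measurement performed at each edge arrival}: when $\dedge{uv}$ arrives, measure with projectors onto $\frac{1}{\sqrt{2}}(\ket{u\cdots}\pm\ket{v\cdots})$; if both endpoints satisfy the property being tested the ``$+$'' outcome is favored, otherwise ``$+$'' and ``$-$'' are equiprobable and cancel in expectation. Making this work in a stream requires further ideas absent from your plan: the state is not a vertex register with an attached snapshot, but a superposition over basis states $\ket{v,j,e}$ whose \emph{multiplicity in $j$} encodes the running degree of $v$, with large hash-triggered offsets additionally encoding the out-degree; the measurements are arranged so that, conditioned on returning the complementary projector, they delete exactly the states that would otherwise corrupt later measurements; and a normalization argument (Lemma~\ref{lm:earlyterm}) shows the shrinking norm exactly cancels the early-termination probability, keeping the per-edge contribution unbiased. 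This is destructive measurement throughout the pass, not unitary evolution followed by a single final read --- a fundamentally different mechanism from the ``make the classical branches controlled unitaries and uncompute'' recipe you sketch.
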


Our quantum algorithm is inspired by the classical algorithm of Saxena, Singer,
Sudan, and Velusamy~\cite{SSSV23b}, which achieves this same approximation
ratio in a classically-optimal $\bOt{\sqrt{n}}$ space. Like theirs, our
algorithm estimates a histogram of the edges of $G$ where the buckets
correspond to edges between ``bias classes'', which partition the vertices of
$G$ according to their bias (the difference between the in-degree and
out-degree of a vertex, normalized by the degree).  In their algorithm, the
histogram is then used to estimate the output value of a 0.4835-approximation
algorithm for $\mdcut$ due to Feige and Jozeph~\cite{FJ15} that randomly
assigns each vertex to a side of a cut based solely on its bias.  Their
algorithm prescribes a constant number of bias ranges, where the random
assignment only depends on these ranges.  This allows the aforementioned
histogram estimate to be constructed with respect to a constant number of bias
classes. We use a result from recent work\footnote{The result in~\cite{S23}
also generalizes the result from \mdicut{} to \mtwoAnd{}, where the stream
consists of arbitrary two-variable clauses and the goal is to approximate the
maximum number of them that can be simultaneously satisfied. We believe our
algorithm should generalize similarly, but we do not consider the question in
this paper.} by Singer~\cite{S23} that improves the approximation ratio offered
by~\cite{FJ15} with a new set of bias classes, allowing us an improved
approximation ratio (we note that this later work also improves the
approximation ratio offered by the aforementioned classical algorithm; the
advantage of our quantum algorithm lies in its exponentially better space
complexity).

The existence of quantum advantage for this problem contrasts with recent work
of Kallaugher and Parekh~\cite{KP22}, who show that approximating the
\emph{undirected} maximum cut problem (\mcut) in graph streams does not admit
any asymptotic quantum advantage. Random assignment of vertices to a side of a
cut yields a $1/2$-approximation for $\mcut$ and a $1/4$-approximation for
$\mdcut$.  These can be implemented as streaming approximations by counting the
number of edges and dividing by 2 or 4.  While $\mcut$ requires linear space
(whether in the quantum or classical setting) to do better than a
$1/2$-approximation, $\mdcut$ does admit a $\bO{\log n}$-space classical
algorithm that beats the trivial $1/4$-approximation~\cite{CGV20}.  As noted
above, for $\mdcut$, $\plog(n)$-space classical algorithms cannot offer better
than a $4/9$-approximation, while we show such a quantum streaming algorithm
exists.

\paragraph{Quantum Approximation Advantages} Finding provable quantum
advantages for approximating discrete optimization problems is an open problem
that has received considerable attention, especially following the introduction
of the Quantum Approximate Optimization Algorithm
(QAOA)~\cite{farhi2014quantum}.  As canonical examples of constraint
satisfaction problems (CSPs), $\mcut$ and related problems have served as focal
points in QAOA analysis and empirical performance studies.  The approximability
of $\mcut$ and other CSPs is well understood conditional on the unique games
conjecture (UGC)~\cite{raghavendra2008optimal}.  If the latter holds, then for
every CSP there is some $\alpha \in (0,1]$ for which an $\alpha$-approximation
is achievable in polynomial time, but for which it is NP-hard to obtain an
$(\alpha + \varepsilon)$-approximation for any $\varepsilon > 0$.  This leaves
little hope for worst-case quantum approximation advantages, as we do not
expect polynomial time quantum algorithms to solve NP-hard problems. 

Our work, on the other hand, shows that a provable quantum approximation
advantage \emph{is} possible in the space-constrained streaming setting. The
$\plog$-space streaming setting does not admit $\alpha$-approximations of the
form described above, which are based on solving semidefinite programs, leaving more room for quantum advantage. For $\mdcut$,
$\alpha > 0.874$ is known to be possible in general~\cite{LLZ02,BHPZ23}, but the
work of~\cite{CGV20} shows that $4/9$ is the best classically possible for
$\plog$-space streaming algorithms. As previously mentioned, it is impossible for a $\plog$-space
quantum algorithm to attain better than a $1/2$-approximation for any
$\varepsilon > 0$ in the streaming setting.  This is also true for $\mdcut$,
since approximating $\mcut$ can be reduced to approximating $\mdcut$ in
instances where $(v,u)$ is an edge whenever $(u,v)$ is.  Therefore, there is an 
opportunity for $\plog$-space quantum $\beta$-approximations for $\beta \in (4/9,1/2]$,
and we indeed demonstrate that an exponential quantum space advantage is possible in this range.

\section{Our Techniques}
We follow the approach of~\cite{SSSV23b}, who give an $\bOt{\sqrt{n}}$ upper
bound for $0.4835$-approximation of \mdcut. To achieve this result, they use
the notion of a ``first-order snapshot'' of a graph. Introduced
in~\cite{SSSV23a}, this is a histogram of the frequency with which the
(directed) edges of the graph go from one ``bias class'' to another.

The bias of a vertex $v$ is defined as $b_v = \frac{\dout-\din}{d_v}$ where
$d_v$ denotes the degree of $v$, $\dout_v$ denotes the out-degree of $v$, and
$\din_v$ denotes the out-degree of $v$. Note that $b_v \in \brac{-1,1}$. Given
a partition of $\brac{-1,1}$ into intervals
$\interval{a,b}$ (along with one of the form $\brac{a,1}$, so that all of
$\brac{-1,1}$ is covered), the \emph{bias classes} $H_i$ are the sets of of
vertices whose biases belong to each interval. The first-order snapshot of the
graph is then given by the count of edges $\dedge{uv}$ such that $(u,v) \in H_i
\times H_j$ for each $(i,j)$. From hereon we will drop the ``first-order''
qualification and refer to this simply as a ``snapshot''.

The key tool used in~\cite{SSSV23b} is a result of~\cite{FJ15} stating that
$0.4835$-approximation of the \mdcut{} value of a graph can be computed from
its snapshot, where the set of bias classes defining the snapshot is the same
for all graphs and so, in particular, the number of different bias classes is
constant. We use the result of~\cite{S23} instead, but the central problem is
unchanged: given a constant set of bias thresholds, estimate the corresponding
snapshot. See a formal description in Section \ref{section:dicut_to_bias}. The
authors give a $\bOt{\sqrt{n}}$-space classical algorithm for estimating these
snapshots in the stream, and thereby for computing a $0.4835$-approximation of
the \mdcut{} value of a graph.

We show that the snapshot of a graph (and thus an approximation to its
\mdcut{} value) can be estimated by a quantum streaming algorithm in $\plog(n)$
space. In the remainder of this section, we will describe the main technical
ideas behind this contribution.

\subsection{One-Way Communication, Simplified Problem}
We will start by considering a simplified version of this problem in the
one-way communication setting. Suppose there are two parties, Alice and Bob,
each with their own input. Alice's input is $n$ labelled graph vertices, and
Bob's input is a ``directed matching'' (a set of vertex-disjoint directed
edges), and a pair of labels $i, j$. Alice is allowed to send a message to Bob,
and after receiving this message, Bob's goal is to estimate the number of edges
from his matching that have a vertex labeled $i$ as its head and vertex labeled
$j$ as its tail.  The question in this setting is how small Alice's message
could be.

Note that if Alice chooses $\bO{\sqrt{n}/\eps^2}$ vertices uniformly at random
and sends the sampled vertices along with the labels, one can show (the
``Birthday paradox'') that Bob would be able to estimate (up to $\bO{\eps n}$
additive error) the number of correct edges with probability 2/3. This is (up
to a log factor) optimal for classical protocols by~\cite{GKKRW07}.

On the other hand, with quantum communication, we can achieve a significant
improvement using a slightly modified version of the quantum Boolean Hidden
Matching protocol from~\cite{GKKRW07}. Alice sends to Bob $k$ copies of the
superposition \[
\frac{1}{\sqrt{2n}} \paren*{\sum_{v \in V} \ket{vi_v h} + \ket{vi_vt}}
\]
where $i_v$ is the label of the vertex $v$, and the last register $h, t$
denotes if the element is to be treated as head or tail of an edge. Bob then
measures each copy of the state with the projectors onto the following vectors
along with the projector onto the complement of the space they span. \[
\frac{\ket{uih} \pm \ket{vjt}}{\sqrt{2}}
\]
for each $\dedge{uv}$ from Bob's graph.

For each copy of the superposition sent and measured, each state of the form
$\frac{\ket{uih} + \ket{vjt}}{\sqrt{2}}$ for an edge $\dedge{uv}$ will be
returned with probability $1/n$ if $u$ and $v$ are labelled with $i$ and $j$,
with probability $1/4n$ if the vertices have exactly one correct label, and
probability zero otherwise. If Bob sees such an edge, he adds $n/k$ to his
estimate of the number of edges with head labelled $i$ and tail labelled $j$ 
where $k$ is an accuracy parameter to be set later. 

Each state of the form $\frac{\ket{uih} - \ket{vjt}}{\sqrt{2}}$ will be
returned with probability $1/4n$ if the vertices have exactly one correct
label, and probability zero otherwise. If Bob sees such an edge, he subtracts
$n/k$ from his estimate. 

The expectation of Bob's estimate will therefore be correct, and its variance
will be $\bO{n^2/k}$. If $k = \bT{1/\eps^2}$, Bob's estimate is $\eps n$-close
to the correct value with probability $2/3$. This protocol only uses
$O(\frac{1}{\eps^2} \log{n})$ qubits.

\subsection{One-way Communication, Snapshot Approximation}
Now, let us return to the original problem of estimating the entries of the
graph snapshot, while remaining in the two player one-way communication
setting. Both Alice and Bob are given a directed graph, and their goal is to
estimate the bias histogram of the combined graph.

How is this problem different from the one we considered before? Firstly, Bob's
input is no longer a matching, which means the projectors Bob used are no
longer guaranteed to be orthogonal. We could address this by splitting Bob's
graph into matchings and measuring a different copy of Alice's state with each,
but this would require Alice to send $\bT{d_{\text{max}}}$ copies (where
$d_{\text{max}}$ is the maximum degree of the graph), which eliminates the
advantage we achieved previously\footnote{This same issue arises
in~\cite{K21}. There it was solved by using a classical algorithm to handle
cases where $d_{\text{max}}$ is particularly bad, at the cost of reducing the
exponential advantage to only polynomial.}.

Secondly, we still have to estimate the number of edges between vertices with a
pair of labels, but now the labels (bias classes) depend on the graph itself.
Moreover, neither Alice nor Bob alone know the labels since the labels depend
on both Alice and Bob's input graphs.

Fortunately, we can tackle both problems using properties of how the biases
depend on the two players' graphs. Our first observation is that, if we're
given the biases of a vertex in Alice's input and Bob's input separately, as
well as the degrees of this vertex in each of the graphs, we can obtain its
bias with respect to the whole graph. The second observation is that if the
degree of a vertex in Bob's graph is much higher than its degree in Alice's
graph, Alice's graph contributes very little to the bias of this vertex and
thus Bob can compute an estimate of the bias by himself. 

The former means that, if Alice sketches her vertices in $\plog(n)$ different
subsets based on their degrees and biases (so that for each sketch, Bob knows
the degree and bias of the vertices he is dealing with up to a small error),
Bob has enough information to approximate the biases in the full graph.  The
latter fact means that, if Alice copies each of her vertices with multiplicity
equal to a sufficiently large constant times its degree (which she can afford
to do, as it results in a superposition with no more than $\bO{m}$ states,
where $m$ is the number of edges in her input), either Bob will be able to
measure with all of his edges, or Bob does not need Alice's input to determine
the bias of this vertex.

Suppose Alice sends the superposition
\[
\frac{1}{\sqrt{\abs{H} + \abs{T}}}\sum_{v \in H}\sum_{i \in
\brac{\ceil{d_H/\eps}}} \ket{vih} + \sum_{v \in T}\sum_{i \in
\brac{\ceil{d_T/\eps}}} \ket{vit}
\]
where $H, T$ are subsets of Alice's vertices with biases and degrees in a small
enough range (such that Bob can know the biases and degrees to an $\eps$
approximation), with $d_H$, $d_T$ upper bounds on the degrees of vertices in
$H$ and $T$. Bob may then measure the state with the projectors onto the
vectors
\[
\ket{ui_{\dedge{uv}} t} \pm \ket{vj_{\dedge{uv}} t}
\] 
for each $\dedge{uv}$ in Bob's graph that has the degree of $u$ in Bob's graph
at most $d_H/\eps$ and the degree of $v$ in Bob's graph at most $d_T/\eps$. The notation
$i_{\dedge{uv}}$ denotes the index of edge $\dedge{uv}$ in a fixed ordering of out-edges
of $u$, and $j_{\dedge{uv}}$ denotes the index of edge $\dedge{uv}$ in a fixed ordering
of in-edges of $v$. Now the measurement operators are orthogonal again.

Similarly to the previous case, $\ket{ui_{\dedge{uv}}t} +
\ket{vj_{\dedge{uv}}t}$ will be the measurement outcome with probability
$2/(\abs{H} + \abs{T})$ if $u \in H$ and $v \in T$, and $\ket{ui_{\dedge{uv}}t}
\pm \ket{vi_{\dedge{uv}}t}$ are equally likely otherwise. So Bob can use this
measurement to estimate how many of his edges go from $H$ to $T$, and also to
\emph{sample} from such edges\footnote{With the caveat that he will also sample
pairs that don't exist.  But because each fictitious pair is seen equally often
with a $+$ or $-$ sign, he can still sample ``in expectation''}.

Given such a sampled edge, Bob can calculate the bias of its endpoints using
his knowledge of the degrees and biases of $u$ and $v$ restricted to Alice's
graph, and their degrees and biases in his own graph. So by repeating this
process for $\plog(n)$ many combinations of ranges, the players can approximate
the number of edges between each pair of ``bias classes'', with the exception
of vertices whose degree is much higher (more than $1/\eps$ times as high) for
Bob than Alice. 

For these vertices, Bob can approximate their contribution to the snapshot with
only his own information, as Alice's input only provides an $\eps$ contribution
to their biases. Of course, he does not actually know which vertices these
should be. However, this can be solved by having Bob calculate this for
\emph{every} vertex, and then performing appropriate corrections when the
protocol samples non low-degree vertices.

\subsection{Streaming Algorithm}
Now we want to implement this protocol in the stream. Our first obstacle is
that now, each time a new edge arrives, we need to ``process'' it both as Alice
and as Bob, meaning we need to update the superposition and measure it.
Secondly, instead of performing our measurements all at once, we will need to
perform them edge by edge, at each point using projectors onto a small part of
the space along with a ``complementary projector'' onto the rest of the space.
When a measurement returns something \emph{other} than the complementary
projector, our quantum stage will terminate and we will proceed classically,
but when the complementary projector is returned it will be important to ensure
that the state retains the properties needed for future measurements.  Finally,
we may no longer simply restrict our input to vertices in a specific class of
degrees or biases, as at the time we see an edge we do not know what edges have
arrived incident to its endpoints in the past (let alone the future). So we
need to maintain a quantum state that somehow encodes the degree and bias of
each vertex, in a way such that whenever an edge arrives, we can measure with
it and have zero (expected) effect on our output when the edge's endpoints have
the wrong degrees or biases.

Start by considering the case where we are only interested in whether the
endpoints of the edge have the right \emph{degree}. To make things even
simpler, imagine we only care whether a vertex has degree \emph{at least} $d$.
We may store the state \[
\frac{1}{\sqrt{M'}}\paren*{\Sc + \sum_{v \in V} \sum_{i = 1}^{\brac{d_v}}
\paren*{\ket{vih} + \ket{vit}}}
\] where $d_v$ is the degree of $v$ among the edges that have arrived up until
now, and $\Sc$ is is a collection of ``scratch states'' that do not contain any
information about the graph and are there to be swapped with states we want, in
order to maintain this superposition. When a new edge $\dedge{uv}$ arrives, we
can maintain this state by sending $\ket{uih}\to\ket{u(i+1)h}$,
$\ket{uit}\to\ket{u(i+1)t}$,$\ket{vih}\to\ket{v(i+1)h}$,$\ket{vit}\to\ket{v(i+1)t}$
and swapping\footnote{That is, executing the unitary operation that swaps a
given basis element $\ket{is}$ in the superposition $\Sc = \sum_i \ket{is}$ for
the desired state while leaving the rest of the space unchanged. Note that this
is distinct from the ``swap'' operation that e.g.\ swaps the two registers of a
two-qubit state. By maintaining a single counter we may track how many of these
``scratch states'' have already been used and thus ensure our swap always
targets a state present in the superposition.} states from $\Sc$ for
$\ket{u1h}, \ket{u1t}, \ket{v1h}, \ket{v1t}$. $\Sc$ will therefore require $2m$
states to start (where $m$ is the number of edges in the stream) and so the
normalization factor $M'$ will start at $M = 2m$, before changing as the state
is measured. 

To use this state, after $\dedge{uv}$ arrives and the state has been updated
accordingly, we may measure with the projectors (along with a complementary
projector onto the space not spanned by them) onto the vectors \[
\ket{ud_Hh}\pm\ket{vd_Tt}
\]
where $d_H, d_T$ are the minimum desired degrees for the head and the
tail respectively. Then, if both endpoints have achieved the desired degree,
$\ket{ud_Hh} + \ket{vd_Tt}$ is a possible outcome for the measurement but not
$\ket{ud_Hh} - \ket{vd_Tt}$. Otherwise both are equally likely, and so we can
estimate the number of edges between pairs of vertices with the right degrees
(similar to the 2-player labeling problem). Note that if
$\ket{ud_Hh}\pm\ket{vd_Tt}$ \emph{is} returned, the state collapses to the
corresponding vector---at this point we stop using the state and proceed
classically with the returned vertices $u$ and $v$ (e.g. counting how many in-
and out-edges are seen incident to them after this point). Therefore, each copy
of the quantum state maintained allows us to ``sample'' up to one pair of
vertices $(u,v)$.

One side-effect of this measurement, that will be convenient later, is that
conditioned on returning the complementary projector (i.e. not sampling some
$(u,v)$ and terminating the quantum stage) it ``deletes'' $\ket{ud_Hh}$ and
$\ket{vd_Tt}$ from the superposition after each measurement, so the state we
maintain ends up being \[
\frac{1}{\sqrt{M'}}\paren*{\Sc + \sum_{v \in V} \sum_{i = 1}^{\min(d_v, d_H-1)}
\ket{vih} + \sum_{i = 1}^{\min(d_v, d_T-1)}\ket{vit}}\text{.}
\]
This also means that each measurement before termination reduces $M'$, and so
the probability of returning a given measurement outcome does not depend on how
many edges have been processed so far (as the probability that the algorithm
terminates before a given edge is processed exactly cancels out the decrease in
$M'$ conditioned on the algorithm \emph{not} terminating before then). We can
extend this method to only count (in expectation) edges where $u$ and $v$ are
in \emph{ranges} $\interval{d_H,d_H'}$ and $\interval{d_T,d_T'}$, by
maintaining four copies of the state and measuring all the combinations of
$d_H, d_H'$ and $d_T, d_T'$, and subtracting appropriately. This means we are
now maintaining four \emph{different} states for each sample we want, because
of the difference in which states the measurements delete: \[
\frac{1}{\sqrt{M'}}\paren*{\Sc + \sum_{v \in V} \sum_{i = 1}^{\min(d_v, x-1)}
\ket{vih} + \sum_{i = 1}^{\min(d_v, y-1)}\ket{vit}}
\]
where $x = d_H$ or $d_H'$ and $y = d_T$ or $d_T'$.

Unfortunately, the information only about the degrees is not enough to estimate
the bias. Moreover, our method does not automatically permit associating tuples
of integers with each vertex. Our approach is to store the out-degree of the
vertex in the higher order bits of the degree counter.

To do this without overwriting the degree information\footnote{One might ask
why we need to use $d_H'$ and $d_H'$ here, as $d_H$ and $d_T$ would suffice to
avoid a conflict with the degree information. This will become necessary
because of how our measurements delete states from the superposition.}, we will
encode seeing an edge $\dedge{uv}$ by sending $\ket{uih}\to\ket{u(i+d_H')h}$
and $\ket{uit}\to\ket{u(i+d_T')t}$, and then swapping out $d_H' + d_T'$ scratch
states for $\ket{u1h}\dots\ket{ud_H'h}$ and $\ket{u1t}\dots\ket{ud_T't}$. In
order to avoid blowing up the number of states we need, we only do this with
probability $1/d_H$, $1/d_T$ respectively (we will always choose $d_H$ and
$d_H'$ to be within a constant factor of each other), and so if, for instance,
we see $d_u - 1$ edges incident to a vertex $u$ that do not trigger this event,
and then see one that does trigger it, the $\ket{uih}$  portion of our state
becomes \[
\sum_{i = 1}^{\min(d_u, d_H-1)} \ket{uih} + \sum_{u=d_H}^{d_u+d_H'} \ket{uih}
\]
up to normalization. Now, if we measure with \[
\ket{u(d_H+d_H')h} \pm \ket{v(d_T+d_T')t}
\]
in addition to \[
\ket{u(d_H)h} \pm \ket{v(d_T)t}
\]
we can get an estimator that counts when both $u$ and $v$ have degree at least
$d_H$ and $d_T$ respectively, \emph{and} have each seen at least once out-edge
that was sampled. So this approximately checks the out-degree of $u$ and $v$,
and as before we can convert this into approximately counting how often these
out-degrees lie in certain \emph{ranges} by adding three additional estimators.

This method only checks a very rough approximation to the out-degrees and
therefore biases of $u$ and $v$. We improve it by adding extra estimators \[
\ket{u(d_H+id_H')h} \pm \ket{v(d_T+jd_T')t}
\]
for $i = 1,\dots,\kappa-1$ and $j = 1,\dots,\kappa-1$. This has $\kappa^2$ overhead
as we need the measurements to be orthogonal, and so need to perform them on
different pairs of states, but we only need constant $\kappa$ for a
sufficiently accurate estimate. In combination with some ``cleanup''
measurement operators we end up with e.g. the $\ket{uih}$ portion of the state
being \[
\sum_I \sum_{i \in I} \ket{uih} 
\]
up to normalization, where the $I$ are a sequence of intervals, with the last
element in each interval encoding the degree of $u$ and the \emph{number} of
intervals encoding the out-degree of the graph.

\paragraph{Dealing with noise in the bias} The final issue is that our estimate
of the bias of a vertex is noisy, because of the way we count out-edges. One
challenge for snapshot algorithms is that even a small error in estimating 
the bias of a vertex can lead to large errors in the snapshot, if e.g. a vertex
with large degree has bias close to the boundary of a class. The authors
of~\cite{SSSV23b} address this problem by ``smoothing'' techniques. We, on the
other hand, introduce an object that we call the ``pseudosnapshot'',
corresponding to our noisy estimates of the biases, and show that approximating
the entries of this object suffices for the \mdcut\ approximation.

\subsection{Proof overview}
In section \ref{section:dicut_to_bias} we recall the statement of the reduction
from \mdcut{} to computing snapshots. In section
\ref{section:bias_to_pseudobias} we introduce the notion of the pseudosnapshot
and how it is related to the snapshot. In section
\ref{section:pseudobias_quantum_algorithm} we describe the key quantum primitive
that estimates each entry of the pseudosnapshot (restricted to edges with
head and tail in specific ranges of degrees) and prove its correctness.
Finally, in section \ref{section:max_dicut_quantum_algorithm} we show how to
put everything together to get the algorithm for $0.4844$-approximation of
\mdcut.

\section{Preliminaries}
The graphs we deal with will all be directed graphs. When the graph $G = (V,E)$
being dealt with is clear, $n$ will be the number of the vertices of the graph,
and $m$ the number of edges. For all $v \in V$, $d_v$ is its degree, $\dout_v$
the number of edges with $v$ as their head, $\din_v$ the number with $v$ as
their tail, and $b_v = \frac{\dout_v - \din_v}{d_v}$ will be the \emph{bias} of
$v$.

We will be interested in the \emph{maximum directed cut value} of a graph.
\begin{definition}
Let $G = (V,E)$ be a directed graph. Then \[
\mdcut(G) = \max_{x \in \bool^V}\abs{\set{\dedge{uv} \in E : x_u = 0, x_v = 1}}\text{.}
\]
\end{definition}
Specifically, we will be interested in the complexity of attaining an $\alpha$-approximation to \mdcut.
\begin{definition}
For any $X, X' \in \Rbb_{\ge 0}$, $X'$ $\alpha$-approximates $X$ if $X' \in \brac{\alpha X, X}$.
\end{definition}
Note that recent work on streaming \mcut~\cite{KK19,KP22} uses the opposite
definition, where $X' \in \brac{X, \alpha X}$, so the complexity of achieving
an $\alpha$-approximation for us is equivalent to their achieving a $1/\alpha$
approximation. We adopt this notation for consistency with recent work on
streaming \mdcut{}~\cite{SSSV23b}.

\section{Reducing \mdcut{} to Snapshot Estimation}\label{section:dicut_to_bias}
We follow~\cite{SSSV23b} in reducing this problem to the problem of estimating
a ``snapshot'' of the graph.
\begin{definition}
Let $\tb \in \brac{-1,1}^\ell$ be a vector of bias thresholds. The (first-order)
snapshot $\hist{G} \in \Nbb^{\ell \times \ell}$ of $G = (V,E)$ is given by: \[
\hist{G}_{i,j} = \abs{\set{\dedge{uv} \in E}: u \in H_i, v \in H_j}
\]
where $H_i$ is the $i\nth$ ``bias class'', given by \[
H_i = \begin{cases}
\set{v \in V : b_v \in \interval{\tb_i,\tb_{i+1}}} & \mbox{$i \in \brac{\ell - 1}$}\\
\set{v \in V : b_v \in \brac{\tb_{\ell}, 1}} & \mbox{$i = \ell$.}
\end{cases}
\]
\end{definition}
\begin{lemma}[From Table~1 of~\cite{S23}, strengthening of Lemma~3.19 of \cite{SSSV23b}, in turn from~\cite{FJ15}]
\label{lm:dicuttobias}
There exists a constant $\alpha > 0.4844$, $\ell \in \Nbb$, a vector of bias
thresholds $\tb \in \brac{-1,1}^{\ell}$, and a vector of probabilities $\rb \in
\brac{0,1}^\ell$ such that $\rb^\dagger \hist{G} (1^{\ell} - \rb)$ is an
$\alpha$-approximation to the \mdcut{} value of $G$.
\end{lemma}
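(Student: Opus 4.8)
The plan is to derive this lemma as a direct consequence of the structure of ``oblivious bias-based rounding'' algorithms for $\mdcut$, as analyzed in \cite{FJ15} and sharpened in \cite{S23}. The underlying object is an algorithm that, given the bias $b_v$ of each vertex, independently places $v$ on side $1$ of the cut with some probability $f(b_v)$ depending only on $b_v$, and on side $0$ otherwise. For such an algorithm, the expected number of cut edges is $\sum_{\dedge{uv} \in E} (1 - f(b_u)) f(b_v)$, since $\dedge{uv}$ is cut exactly when $u$ lands on side $0$ and $v$ on side $1$, and these events are independent. The key point established in \cite{FJ15,S23} is that one can take $f$ to be a \emph{step function} that is constant on each of finitely many bias intervals without losing much in the approximation ratio: \cite{S23}, Table~1, exhibits such a step function with $\ell$ steps achieving ratio $\alpha > 0.4844$ against the optimum, and in fact against $\mdcut(G)$ itself (since the rounding is applied to $b_v$'s coming from an actual fractional solution, or one argues via the integrality of the bias-based bound).

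Given that step function, the proof is then essentially bookkeeping. First I would set $\tb \in \brac{-1,1}^\ell$ to be the vector of left endpoints of the intervals on which $f$ is constant (so that the induced bias classes $H_1, \ldots, H_\ell$ from the snapshot definition are exactly the level sets of $f$), and set $r_i$ to be the constant value of $f$ on $H_i$, giving $\rb \in \brac{0,1}^\ell$. Then for any vertex $v \in H_i$ we have $f(b_v) = r_i$, so the expected cut value of the randomized rounding is
\[
\sum_{\dedge{uv} \in E} (1 - f(b_u)) f(b_v) = \sum_{i,j} \abs{\set{\dedge{uv} \in E : u \in H_i, v \in H_j}} (1 - r_i) r_j = \sum_{i,j} \hist{G}_{i,j} (1 - r_i) r_j,
\]
and the right-hand side is exactly $\rb^\dagger \hist{G} (1^\ell - \rb)$ by definition of matrix--vector multiplication. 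So $\rb^\dagger \hist{G}(1^\ell - \rb)$ equals the expected value of a valid cut, hence is at most $\mdcut(G)$; and by the guarantee of \cite{S23} it is at least $\alpha \cdot \mdcut(G)$ for the stated $\alpha > 0.4844$. This yields the $\alpha$-approximation in the sense of the paper's definition ($X' \in \brac{\alpha X, X}$).

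The only genuinely non-routine ingredient is the lower bound $\rb^\dagger \hist{G}(1^\ell - \rb) \ge \alpha \cdot \mdcut(G)$, i.e.\ that the \emph{step-function} bias-based rounding really does achieve ratio exceeding $0.4844$ against the true optimum. This is precisely the content imported from \cite{S23} (building on the oblivious-rounding framework of \cite{FJ15} and the observation of \cite{SSSV23b,SSSV23a} that the bias-based bound is itself an upper bound on $\mdcut$, so comparing the rounding to the bound suffices). I would not reprove it here: the analysis in \cite{FJ15,S23} involves choosing the step heights to optimize a min over ``configurations'' of adjacent bias values and solving the resulting (finite) optimization, which is exactly the computation tabulated in \cite{S23}. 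So the main obstacle is simply citing the right statement with the right constants; once the step function and its ratio are in hand, the translation into the bilinear-form-over-the-snapshot language is immediate. I would close by remarking that $\ell$, $\tb$, $\rb$ are all absolute constants independent of $G$, which is what makes the snapshot (and hence the downstream streaming algorithm) have constantly many bias classes.
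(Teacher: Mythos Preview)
The paper does not prove this lemma at all: it is stated as an imported result, attributed in the lemma header to Table~1 of \cite{S23} (building on \cite{SSSV23b} and \cite{FJ15}), and used as a black box. Your proposal correctly recognizes this, declining to reprove the approximation-ratio analysis and instead explaining how the oblivious step-function rounding of \cite{FJ15,S23} translates into the bilinear form over the snapshot; this is exactly the right framing, and indeed more explanatory than what the paper itself provides.

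One minor bookkeeping slip: with your convention that $r_i$ is the probability of placing a vertex of class $H_i$ on side $1$, the expected cut you derive is $\sum_{i,j}\hist{G}_{i,j}(1-r_i)r_j = (1^\ell - \rb)^\dagger \hist{G}\,\rb$, not $\rb^\dagger \hist{G}(1^\ell - \rb)$ as stated. To match the paper's expression you should instead let $r_i$ denote the probability of assignment to side $0$ (or equivalently replace $\rb$ by $1^\ell - \rb$). This is purely a naming convention and does not affect the argument.
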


\section{Reducing \mdcut{} to Pseudosnapshot Estimation}\label{section:bias_to_pseudobias}
In this section, we define a ``pseudosnapshot'' $\histps{G}$, based on
hash functions and a coarsening of the vertex degrees, and show that it is
close to the snapshot of a graph $G'$ that is in turn close to $G$. In
the next section, we will show that this can be approximated by a
small space quantum streaming algorithm.
\subsection{Definition}
Let $(d_i)_{i=0}^{\floor{\log_{1+\eps^3}n}}$ be given by $d_i = \floor{(1 +
\eps^3)^i}$ for $i < \floor{\log_{1+\eps^3}  n}$ and $d_{\floor{\log_{1 +
\eps^3}  n}} = n$.  Let $\accuracy \le \poly{n}$, $\eps \in \brac{0,1}$ be
accuracy parameters to be chosen later, and let
$(f_i)_{i=0}^{\floor{\log_{1+\eps^3}  n}}$ be a family of fully independent
random hash functions such that $f_i : E \rightarrow \bool$ is $1$ with
probability $\accuracy/2d_i$, while $g : V \rightarrow \brac{-\eps,\eps}$ is a
fully independent random hash function that is uniform on
$\brac{-\eps,\eps}$.\footnote{We adopt this notation for the sake of clarity,
but we will only ever evaluate $f_i(e)$ at the update when edge $e$ arrives,
and $g$ after processing the entire stream on the endpoints of edges our
algorithm stores, so we do not need to pay the prohibitive overhead of storing
these hash functions. Moreover, while we write $g(e)$ as a random real number,
it will only ever be used in sums and comparisons with numbers of
$\poly(n,\eps)$ precision, so we do not need to store it any more precision
than that.}

Fix an arrival order for the edges $e$ of the directed graph. For any vertex
$v$ and edge $e$, let $\doutb{e}_v$, $\db{e}_v$, refer to the out-degree and
degree of $v$ when only $e$ and edges that arrive before $e$ are counted, and
let $\douta{e}_v$, $\da{e}_v$ refer to these quantities when counting only
edges that arrive \emph{after} $e$. Let $\wt{i}$ be the largest $i$ such that
$d_i < \db{e}$. Then define $\dbps{e}_v = d_{\wt{i}}$, and let $\doutbps{e}_v$
be the number of edges $e'$ with head $v$ that arrive before $e$ and have
$f_{\wt{i}}(e') = 1$, multiplied by $2d_{\wt{i}}/\accuracy$.  We will then define the
$e$-pseudobias of $v$, $\bps{e}_v$, as \[ \min\set*{2\frac{\doutbps{e}_v +
\douta{e}_v}{\dbps{e}_v + \da{e}_v} - 1 + g(v),
1}\text{.}
\]
In other words, the $\bps{e}_v$ is the bias of $v$ when its degree among $e$ and
edges that arrive before $e$ is rounded to the bottom of the interval
$\interval{d_{\wt{i}},d_{\wt{i}+1}}$, and its out-degree among these edges is
estimated using the number of out-edges ``sampled'' by $f_{\wt{i}}$, with a
small amount of noise $g(v)$ added. Since this can sometimes produce a
pseudobias larger than 1, we then cap it at 1.

\begin{definition}
Let $\tb \in \brac{-1,1}^\ell$ be a vector of bias thresholds. The
pseudosnapshot $\histps{G} \in \Nbb^{\ell \times \ell}$ of $G = (V,E)$ is given
by: \[
\histps{G}_{i,j} = \abs{\set{\dedge{uv} \in E: u \in H^{\dedge{uv}}_i, v \in
H^{\dedge{uv}}_j}}
\]
where $H^e_i$ is the $i\nth$ ``$e$-pseudobias'' class, given by \[
H^e_i = \begin{cases}
\set{v \in V : \bps{e}_v \in \interval{\tb_i,\tb_{i+1}}} & \mbox{$i \in \brac{\ell - 1}$}\\
\set{v \in V : \bps{e}_v \in \brac{\tb_{\ell}, 1}} & \mbox{$i = \ell$.}
\end{cases}
\]
The restriction of $\histps{G}$ to $E' \subseteq E$ is then given by: \[
\histps{G,E'}_{i,j} =  \abs{\set{\dedge{uv} \in E': u \in H^{\dedge{uv}}_i, v \in
H^{\dedge{uv}}_j}}
\]
\end{definition}

\subsection{Closeness to Snapshot}
In this section we show that the snapshot and pseudosnapshot of a graph are
close enough to each other for the purpose of approximating \mdcut. We will
assume throughout that the snapshot and pseudosnapshot are defined relative to
the same vector of bias thresholds $\tb$. We will refer to the intervals
$(\interval{\tb_i, \tb_i})_{i \in \brac{\ell - 1}}$ and $\interval{\tb_\ell,
1}$ as ``bias intervals''. 

We start by showing that, with good enough probability, most of the edges
incident to any vertex $v$ will give $v$ a pseudobias in the same interval, and
that pseudobias will not be too far from the true bias.
\begin{lemma}
\label{lm:bpsvacc}
For each $v \in V$, with probability $1 - \bO{\eps^2} - e^{-\bO{\eps^6
\accuracy}}$ over the hash functions $(f_i)_{i = 0}^{\floor{\log_{1+\eps^3}n}}$
and $g$, $\bps{e}_v$ is in the same bias interval for all but $\eps d_v$ of the
edges $e$ incident to $v$, and all of these $\bps{e}_v$ are within $\bO{\eps}$
of $b_v$.  Moreover, these events depend only on $g(v)$ and $f(e)$ for edges
with head $v$.
\end{lemma}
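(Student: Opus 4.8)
The plan is to first condition on a ``generic'' value of the noise $g(v)$, and then to show that for such $g(v)$ almost all of the pseudobiases $\bps{e}_v$, as $e$ ranges over the edges incident to $v$, concentrate around the single value $b_v+g(v)$ up to additive error $\bO{\eps^3}$ --- small enough that they all fall in the bias interval containing $b_v+g(v)$. Throughout I read $\dbps{e}_v=d_{\wt i}$ as ``$\db{e}_v$ rounded down onto the grid $(d_i)$'' (as the surrounding prose describes, $d_{\wt i}\le\db{e}_v<d_{\wt i+1}$), and I read $\doutbps{e}_v$ as the sampled out-edge count scaled by $\max\set{2d_{\wt i}/\accuracy,\,1}$, so that at degree levels with $2d_{\wt i}\le\accuracy$ the out-degree is recorded exactly. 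The \emph{moreover} clause is then immediate by inspection: $\bps{e}_v$ is a function of $g(v)$, of the deterministic quantities $\douta{e}_v,\db{e}_v,\dbps{e}_v,\da{e}_v$, and of $\doutbps{e}_v$, which depends only on the values of the hash functions on edges with head $v$; every event in the statement is a function of the family $(\bps{e}_v)_e$.

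\emph{Degree rounding.} Using $d_{\wt i}\le\db{e}_v$ together with $d_{\wt i+1}-d_{\wt i}=\bO{\eps^3 d_{\wt i}}+\bO1$, we get $0\le\db{e}_v-\dbps{e}_v=\bO{\eps^3 d_v}+\bO1$; since the grid $(d_i)$ takes every integer value up to roughly $\eps^{-3}$, this difference is in fact $0$ when $\db{e}_v\le\eps^{-3}$, and $\bO{\eps^3 d_v}$ when $\db{e}_v\ge\eps^{-3}$. Writing $D_e:=\dbps{e}_v+\da{e}_v$ we have $d_v-D_e=\db{e}_v-\dbps{e}_v$, so $D_e=(1-\bO{\eps^3})d_v$ for \emph{every} edge $e$ incident to $v$. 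Since also $\doutb{e}_v+\douta{e}_v=\dout_v$ exactly, setting $A_e:=\doutbps{e}_v-\doutb{e}_v$ and using $\dout_v/d_v\le 1$ gives
\[
2\frac{\doutbps{e}_v+\douta{e}_v}{D_e}-1+g(v)=b_v+g(v)+\bO{\eps^3}+\bO{A_e/d_v},
\]
so $\bps{e}_v$ equals either $b_v+g(v)+\bO{\eps^3}+\bO{A_e/d_v}$ or $1$, whichever is smaller. Next, the thresholds $\tb_1<\dots<\tb_\ell$ are $\bO1$ fixed constants and $g(v)$ is uniform on an interval of length $2\eps$, so with probability $1-\bO{\eps^2}$ over $g(v)$ the point $b_v+g(v)$ lies at distance at least $c\eps^3$ from every $\tb_k$, for a constant $c$ fixed later; condition on this. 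It then suffices to show that, with probability $1-e^{-\bOm{\eps^6\accuracy}}$ over the hash functions, all but $\eps d_v$ of the edges $e$ incident to $v$ satisfy $\abs{A_e}\le\tfrac{c}{4}\eps^3 d_v$: for those edges the display gives $\abs{\bps{e}_v-(b_v+g(v))}\le\tfrac{c}{2}\eps^3$ (capping at $1$ is harmless, since it only decreases the value and $b_v+g(v)$ is $c\eps^3$-far from $\tb_\ell$), hence all such $\bps{e}_v$ lie in the common interval containing $b_v+g(v)$, and each is within $\abs{g(v)}+\bO{\eps^3}=\bO{\eps}$ of $b_v$.

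\emph{Concentration of the sampling errors (the crux).} Partition the edges incident to $v$ by degree level: level $\ell$ is the set of $e$ with $\dbps{e}_v=d_\ell$, of which there are at most $d_{\ell+1}-d_\ell=\bO{\eps^3 d_\ell}+\bO1$, and the levels sum to $d_v$. Levels with $2d_\ell\le\accuracy$ have $A_e=0$ by the exact-recording convention, so only levels with $d_\ell>\accuracy/2$ matter. For such $\ell$ and an edge $e$ at that level, $\doutbps{e}_v=(2d_\ell/\accuracy)Y_e$ with $Y_e\sim\Bi{\doutb{e}_v}{\accuracy/2d_\ell}$ the number of out-edges of $v$ before $e$ sampled by $f_\ell$, so $\E[\doutbps{e}_v]=\doutb{e}_v$ up to $\pm1$ and $A_e=(2d_\ell/\accuracy)(Y_e-\E Y_e)$. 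A Bernstein bound gives, for each such $e$,
\[
\prob{\abs{A_e}>\tfrac{c}{4}\eps^3 d_v}\le 2\exp\paren*{-\bOm{\frac{(\eps^3 d_v\accuracy/d_\ell)^2}{\doutb{e}_v\accuracy/d_\ell+\eps^3 d_v\accuracy/d_\ell}}}\le 2e^{-\bOm{\eps^6\accuracy}},
\]
since $\doutb{e}_v\le\db{e}_v\le\bO{d_\ell}$ and $d_\ell\le d_v$ make the exponent at least $\bOm{\eps^6\accuracy}$ (variance-dominated case) or at least $\bOm{\eps^3\accuracy}$. As $e$ ranges over a level, $Y_e$ changes only when $e$ crosses an out-edge, so the number of distinct values of $Y_e$ over all levels is at most $\dout_v+(\text{number of nonempty levels})\le\poly(n)$; a union bound over them shows $\abs{A_e}\le\tfrac{c}{4}\eps^3 d_v$ for \emph{all} edges $e$ incident to $v$ with probability $1-\poly(n)\,e^{-\bOm{\eps^6\accuracy}}=1-e^{-\bOm{\eps^6\accuracy}}$, absorbing the polynomial factor into the exponent (valid for $\accuracy\ge\poly(\eps^{-1})\log n$, which we will have). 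This is stronger than the claimed ``all but $\eps d_v$'', and combining the two failure probabilities yields the lemma.

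The main obstacle is the concentration step: the out-degree of $v$ is estimated by a \emph{different} hash function $f_\ell$ at each degree level, and $\doutbps{e}_v$ is only a prefix sum within a level, so one must control all of these partial sums simultaneously while keeping the total failure probability below $e^{-\bOm{\eps^6\accuracy}}$ after union-bounding over the $\bO{\eps^{-3}\log n}$ levels --- which is exactly what forces the $\eps^3$ accuracy target (hence $\eps^6$ in the exponent) to be matched against the $c\eps^3$ distance-from-threshold slack established for $g(v)$. The degree-rounding bookkeeping and the low-degree/oversampling cases (levels with $2d_\ell\le\accuracy$, or $d_v\le\eps^{-3}$) are routine once the conventions above are fixed.
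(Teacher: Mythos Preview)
Your argument has a genuine gap in the final union bound. You control $\prob{|A_e| > \tfrac{c}{4}\eps^3 d_v}$ for each individual $e$ by $e^{-\Omega(\eps^6\accuracy)}$ and then union-bound over up to $\dout_v + O(\eps^{-3}\log n) = \poly(n)$ distinct random variables $Y_e$, asserting that the $\poly(n)$ factor can be absorbed into the exponent because ``$\accuracy \ge \poly(\eps^{-1})\log n$, which we will have''. This premise is false: in the application (the proof of Theorem~\ref{thm:mdcutalg}), $\accuracy$ is chosen to be a \emph{constant} depending only on $\eps$, and the lemma itself places no lower bound on $\accuracy$ beyond what is implicit in the stated failure probability. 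With $\accuracy$ constant, $\poly(n)\cdot e^{-\Omega(\eps^6\accuracy)}$ is not $e^{-\Omega(\eps^6\accuracy)}$; it blows up with $n$, and the lemma as stated would simply be false under your reading.

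The paper's proof avoids this precisely by \emph{using} the ``all but $\eps d_v$'' slack that you dismiss as a gratuitous weakening. It discards the first $\eps d_v$ edges incident to $v$; for the remaining edges, $\db{e}_v$ lies in $[\eps d_v, d_v]$, a multiplicative range of $1/\eps$, so it passes through only $O\bigl(\log_{1+\eps^3}(1/\eps)\bigr) = O\bigl(\eps^{-3}\log(1/\eps)\bigr)$ degree levels---a constant independent of $n$. One then checks concentration at a single reference edge per level and extends to the whole level by monotonicity of $\doutbps{e}_v$ and the fact that $\doutb{e}_v$ varies by at most a $(1+\eps^3)$ factor within a level, so the union bound is over $O_\eps(1)$ events and survives for constant $\accuracy$. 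In short, the $\eps d_v$ exception is not slack but the mechanism that makes the union bound finite; your attempt to prove the statement for \emph{all} edges is exactly what forces the unaffordable $\poly(n)$ factor.
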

\begin{proof}
As there are only constantly many bias intervals, we may without loss of
generality assume that $\eps$ is smaller than half the distance between the
boundaries of any bias interval. Then for every $v \in V$, with probability $1 -
\bO{\eps^2}$ over $g(v)$, $b_v + g(e)$ is at least $C\eps^3$ away from the boundary
of any bias interval, for $C > 0$ a constant to be chosen later. It will
therefore suffice for $\abs{\bps{e}_v - b_v - g(e)} < C\eps^3$ to hold for all
but $\eps d_v$ of the edges $e$ incident to $v$.

First note that, by the definition of the $d_i$, $\dbps{e}_v \in
\interval*{\frac{\db{e}_v}{1 + \eps^3}, \db{e}_v}$ for all $e$. So we only need
to bound $\doutbps{e}_v$.

We will ignore the first $\eps d_v$ edges to arrive incident to $v$. For the edges $e$
arriving after this, $\db{e}_v$ passes through only $\bO{\log_{1 + \eps^3}
\frac{1}{\eps}} = \bO{\frac{1}{\eps^3}\log\frac{1}{\eps}}$ intervals
$\interval{d_i, d_{i+1}}$. It will suffice to show that, for each such interval,\[
\doutbps{e}_v \in ((1 - \eps^3)\doutb{e}_v, (1 + \eps^3)\doutb{e}_v)
\]
for the $d_{i}\nth$ $e$ to arrive incident to $v$ (as $\doutbps{e}_v$ only
increases, and $\doutb{e}_v$ only changes by a $(1 + \eps^3)$ multiplicative factor
in this interval).

So fix such an $e$. Then $\doutbps{e}_v$ is $2d_i/\accuracy$ times the number of
edges $e'$ that arrive before $e$ (including $e$) and have $f_i(e') = 1$. So it is distributed as \[
\frac{2d_i}{\accuracy}\text{B}\paren*{\doutb{e}_v, \frac{\accuracy}{2d_i}}
\]
and so by the Chernoff bounds it is within $\eps^3 \doutb{e}_v$ of
$\doutb{e}_v$ with probability \[ 
1 - e^{-\bO{\eps^6 \accuracy \frac{\doutb{e}_v}{d_i}}} =  - e^{-\bO{\eps^6
\accuracy}}
\]
and by taking a union bound over the $\bO{\frac{1}{\eps^3}\log\frac{1}{\eps}}$
intervals, we have that, with probability $1 - e^{-\bO{\eps^6 \accuracy}}$ over
$(f_i)_{i = 0}^{\floor{\log_{1+\eps^3}n}}$, $\dbps{e}_v$ and $\doutbps{e}_v$ are
$(1 + \bO{\eps^3})$ multiplicative approximations of $\db{e}_v$ and
$\doutb{e}_v$, respectively, and so \[
\bps{e}_v \in (b_v + g(v) - \bO{\eps^3}, b_v + g(v) + \bO{\eps^3})
\] 
for all $e$ incident to $v$ after the first $\eps d_v$ to arrive. The lemma
therefore holds if we choose $C$ to be a large enough constant.
\end{proof}
This allows us to show that, with only small edits to $G$, we can change the
bias of vertices in $G$ in such a way that the pseudosnapshot becomes an
approximately accurate snapshot.
\begin{lemma}
\label{lm:pseudobiascloseness}
With probability $1 - \bO{\eps} - e^{-\bO{\eps^6\accuracy}}$ over $(f_i)_{i =
0}^{\floor{\log_{1+\eps^3} n}}$ and $g$, there exists a graph $G'$ differing
from $G$ in $\bO{\eps m}$ edges such that any $\bO{\eps m}$-accurate estimate
of the pseudosnapshot of $G$ is a $\bO{\eps m}$-accurate estimate of the
snapshot of $G'$.
\end{lemma}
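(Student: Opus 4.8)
The plan is to use Lemma~\ref{lm:bpsvacc} to replace each vertex's erratic per-edge pseudobiases by a single ``target'' bias interval, to build $G'$ by perturbing $G$ so that each vertex's \emph{true} bias in $G'$ lands in its target interval, and then to compare $\histps{G}$ with $\hist{G'}$ bucket by bucket.

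First I would set up a good/bad dichotomy. Call $v$ \emph{good} if the conclusion of Lemma~\ref{lm:bpsvacc} holds for it: there is a bias interval $I(v)$ with $\bps{e}_v\in I(v)$ for all but $\eps d_v$ of the edges $e$ incident to $v$, and $\abs{\bps{e}_v-b_v}=\bO{\eps}$ for all those $e$. By the final sentence of that lemma the indicator of ``$v$ is bad'' depends only on $g(v)$ and the hashes of the edges with head $v$, so these indicators are independent over $v$, each with expectation $\bO{\eps^2}+e^{-\bO{\eps^6\accuracy}}$. Thus $\E{\sum_{v\text{ bad}}d_v}\le 2m\bigl(\bO{\eps^2}+e^{-\bO{\eps^6\accuracy}}\bigr)$, and Markov's inequality gives that with probability $1-\bO{\eps}-e^{-\bO{\eps^6\accuracy}}$ the total degree of the bad vertices, hence also the number of edges meeting one, is $\bO{\eps m}$; I would condition on this event. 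I would also record that if $v$ is good and $b_v\notin I(v)$ then, taking any non-exceptional edge $e$ at $v$, $\mathrm{dist}(b_v,I(v))\le\abs{b_v-\bps{e}_v}=\bO{\eps}$, so $b_v$ is within $\bO{\eps}$ of an endpoint of $I(v)$; that endpoint must be one of the thresholds $\tb_k\in(-1,1)$ rather than $\pm1$ (everything within $\bO{\eps}$ of $\pm1$ lies in the extreme interval once $\eps$ is small), so $1\pm b_v=\bOm{1}$.

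Next I would construct $G'$. For each good $v$ with $b_v\notin I(v)$, add edges incident to $v$ — with the other endpoint always a single fresh auxiliary vertex $s$ — oriented as in- or out-edges of $v$ so as to move $b_v$ across the separating threshold into $I(v)$. Since $\mathrm{dist}(b_v,I(v))=\bO{\eps}$ and $1\mp b_v=\bOm{1}$, each added edge moves the bias by $\bT{1/d_v}$ toward the target, so $\bO{\eps d_v}+\bO{1}$ additions per such $v$ suffice and ($I(v)$ having constant length) the result stays in $I(v)$; no other vertex is touched, so $E\triangle E'$ is exactly this set of added edges. Now $\histps{G}_{i,j}$ counts $\dedge{uv}\in E$ with $\bps{\dedge{uv}}_u$ in interval $i$ and $\bps{\dedge{uv}}_v$ in interval $j$, while $\hist{G'}_{i,j}$ counts $\dedge{uv}\in E'$ with $b'_u$ in interval $i$ and $b'_v$ in interval $j$; by construction $b'_v\in I(v)$ for every good $v$. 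Hence any edge with both endpoints good is placed in the same bucket by both snapshots unless it is \emph{exceptional} at an endpoint (i.e.\ $\bps{\dedge{uv}}_u\notin I(u)$ or $\bps{\dedge{uv}}_v\notin I(v)$), and there are at most $\sum_v\eps d_v=2\eps m$ exceptional edges. Combining this with the $\bO{\eps m}$ edges meeting a bad vertex, the added edges (all at $s$), and $\ell=\bO{1}$, gives $\norm*{\histps{G}-\hist{G'}}_\infty=\bO{\eps m}+\bO{\#\set*{v\text{ good}:b_v\notin I(v)}}$, and the triangle inequality then turns any $\bO{\eps m}$-accurate estimate of one snapshot into an $\bO{\eps m}$-accurate estimate of the other — provided $\#\set*{v\text{ good}:b_v\notin I(v)}=\bO{\eps m}$.

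The main obstacle is precisely that last proviso. A good vertex $v$ with $b_v\notin I(v)$ has true bias within $\bO{\eps}$ of a threshold but, owing to the noise $g(v)$, on the wrong side of it relative to the bulk of its pseudobiases; repairing $v$ costs at least one edge even when $\eps d_v<1$, so a priori $\bOm{m}$ low-degree vertices could force too many edits. The heart of the proof is to rule this out: using that the thresholds $\tb_k$ from Lemma~\ref{lm:dicuttobias} are fixed irrational constants — so that, for $\eps$ small enough, no vertex of degree below a slowly growing bound can lie within $\bO{\eps}$ of a $\tb_k$ — together with the handshake identity $\sum_v d_v=2m$, one shows that the repairs of the remaining (higher-degree) vertices cost only $\bO{\eps d_v}$ apiece and sum to $\bO{\eps m}$. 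An alternative worth trying is not to edit $v$ at all but instead, in $G'$, to split $v$ into one copy per bias interval and route each incident edge to the copy named by its pseudobias, making $\hist{G'}=\histps{G}$ exactly; but this merely replaces the edit cost by a relabelling cost and needs the same counting bound to keep $\abs{E\triangle E'}$ small, so the obstacle is unchanged.
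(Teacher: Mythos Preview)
Your overall structure---the good/bad split via Lemma~\ref{lm:bpsvacc}, Markov's inequality on the total bad degree, repairing each good vertex by adding appropriately-oriented edges to a single auxiliary vertex, and then a bucket-by-bucket comparison---is exactly the paper's approach (the paper also reroutes the edges touching bad vertices through its dummy $\perp$, but this is cosmetic). Where you diverge is in isolating the ``$+\,\bO{1}$ per repair'' as a central obstacle: the paper simply asserts that $\bO{\eps d_v}$ added edges per good vertex suffice and never separates out an additive constant.

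Your proposed resolution of this obstacle does not go through. First, irrationality of the thresholds $\tb_k$ is nowhere asserted in Lemma~\ref{lm:dicuttobias}. Second, even granting it, irrationality only tells you that for each \emph{fixed} degree $d$ the nearest bias $(2j-d)/d$ is at some positive distance from $\tb_k$; it gives no uniform lower bound as $d$ grows. For the handshake step to absorb the additive $+1$ you need your ``slowly growing bound'' to satisfy $D(\eps)=\bOm{1/\eps}$, i.e.\ every bias of a vertex of degree at most $C/\eps$ must lie at distance more than $C'\eps$ from each threshold. But for any real $\tb_k$ and any $d$, the achievable biases $\set{-1,-1+2/d,\dots,1}$ are spaced $2/d$ apart, so taking $d\approx 1/\eps$ already places some bias within $\eps$ of $\tb_k$---the required condition fails for every threshold, irrational or not. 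Thus the subtlety you flag remains unresolved by your argument; the paper's proof, for its part, simply does not engage with it.
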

\begin{proof}
Let $V'$ be the set of all vertices $V$ for which the event described in
Lemma~\ref{lm:bpsvacc} occurs. The vertex set of $G'$ will be $V' \cup \set{\perp}$,
where $\perp$ is a newly introduced dummy vertex. We will construct $G'$ by first
removing all vertices in $V \setminus V'$ from $G$, and replacing the edges
between them and $V'$ with corresponding edges between $\perp$ and $V'$.

Then, for each vertex $v \in V'$, we will add $\bO{\eps d_v}$ edges between $v$
and $\perp$, choosing the orientation of these edges so that the bias of $v$ in
$G'$ is in the same bias interval as $\bps{e}_v$ for all but $\eps d_v$ of the
edges $e$ incident to $v$.

$G'$ differs from $G$ in at most \[
2\sum_{v \in V \setminus V'}d_v + \sum_{v \in V'} \bO{\eps d_v}
\]
edges. So by Markov's inequality, as each edge is in $V \setminus V'$ with
probability at most $\bO{\eps^2} + e^{-\bO{\eps^6\accuracy}}$, this is
$\bO{\eps m}$ with probability at least $1 - \bO{\eps} -
e^{-\bO{\eps^6\accuracy}}$.

Moreover, by Lemma~\ref{lm:bpsvacc} all but $\sum_{v \in V \setminus V'}d_v +
\sum_{v \in V}\eps d_v$ of the edges counted in the pseudosnapshot of $G$
had endpoints whose pseudobiases were in the same bias intervals as the biases of
the corresponding vertices in $G'$. Furthermore, there are only $\sum_{v \in V
\setminus V'}d_v + \sum_{v \in V'} \bO{\eps d_v}$ edges in $G'$ that were
\emph{not} counted in the pseudosnapshot of $G$---those added to replace
edges between $V$ and $V'$, and those added to correct the biases of edges in
$V'$. So again by Markov, the pseudosnapshot of $G$ is a $\bO{\eps
m}$-accurate snapshot for $G'$ with probability at least $1 - \bO{\eps} -
e^{-\bO{\eps^6\accuracy}}$.
\end{proof}

\subsection{Reduction from \mdcut}
Now, we can show that the pseudosnapshot gives a good \mdcut{} estimate, by
using the fact that the $G'$ constructed in the previous section has a similar
\mdcut{} value to $G$.
\begin{lemma}
\label{lm:dicuttopseudobias}
Let $\alpha$, $\ell$, $\tb$, $\rb$ be as in Lemma~\ref{lm:dicuttobias}. Then there
exists a constant $C > 0$ such that \[
\rb^\dagger \histps{G} (1^\ell - \rb) - C \eps m
\]
is an $(\alpha - \bO{\eps})$-approximation to $\mdcut\paren{G}$ with probability $1
- \bO{\eps} - e^{-\bO{\eps^6\accuracy}}$ over $(f_i)_{i =
0}^{\floor{\log_{1+\eps^3} n}}$ and $g$.
\end{lemma}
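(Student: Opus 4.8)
The plan is to transfer the guarantee of Lemma~\ref{lm:dicuttobias} from $G$ to the auxiliary graph $G'$ produced by Lemma~\ref{lm:pseudobiascloseness}, and then push the resulting bounds back to $G$, paying only $\bO{\eps m}$ additive error at each step. Throughout, condition on the event of Lemma~\ref{lm:pseudobiascloseness}, which occurs with probability $1 - \bO{\eps} - e^{-\bO{\eps^6\accuracy}}$---exactly the probability appearing in the present statement---so that after conditioning everything is deterministic. On this event there is a graph $G'$ differing from $G$ in $\bO{\eps m}$ edges such that $\histps{G}$, being trivially a $0$-accurate estimate of itself, is a $\bO{\eps m}$-accurate estimate of $\hist{G'}$; in particular the two snapshots agree up to $\bO{\eps m}$ in each of their $\ell^2$ entries.

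First I would pass from snapshots to the bilinear form. Since $\ell$ is an absolute constant and every coordinate of $\rb$ and of $1^\ell - \rb$ lies in $[0,1]$, the entrywise bound above gives
\[
\abs*{\rb^\dagger \histps{G}(1^\ell - \rb) - \rb^\dagger \hist{G'}(1^\ell - \rb)} \le \ell^2 \cdot \bO{\eps m} = \bO{\eps m}\text{.}
\]
Next, apply Lemma~\ref{lm:dicuttobias} to $G'$ (it is a directed graph, so the lemma applies verbatim with the same $\alpha, \ell, \tb, \rb$): $\rb^\dagger \hist{G'}(1^\ell - \rb) \in \brac*{\alpha\, \mdcut(G'),\, \mdcut(G')}$. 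Finally, since inserting or deleting a single edge changes $\mdcut$ by at most $1$, we have $\abs*{\mdcut(G) - \mdcut(G')} \le \bO{\eps m}$. Chaining these three facts (and using $\alpha \le 1$) yields
\[
\rb^\dagger \histps{G}(1^\ell - \rb) \in \brac*{\alpha\, \mdcut(G) - \bO{\eps m},\; \mdcut(G) + \bO{\eps m}}\text{.}
\]

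To conclude, let $C$ be the constant hidden in the $\bO{\eps m}$ at the upper endpoint. Then $\rb^\dagger \histps{G}(1^\ell - \rb) - C\eps m \le \mdcut(G)$, which is the upper half of an $(\alpha - \bO{\eps})$-approximation. For the lower half, $\rb^\dagger \histps{G}(1^\ell - \rb) - C\eps m \ge \alpha\, \mdcut(G) - \bO{\eps m}$, and here is the one place the argument is not pure bookkeeping: I convert this additive slack into a multiplicative loss using the trivial fact (the $\tfrac14$-approximation noted in the introduction) that a uniformly random assignment cuts each edge with probability $\tfrac14$, so $\mdcut(G) \ge m/4$ and hence $\bO{\eps m} = \bO{\eps\, \mdcut(G)}$. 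Thus $\rb^\dagger \histps{G}(1^\ell - \rb) - C\eps m \ge (\alpha - \bO{\eps})\, \mdcut(G)$, completing the proof. The main obstacle is precisely this last conversion: the $\bO{\eps m}$ errors accumulated from the pseudosnapshot and from replacing $G$ by $G'$ are harmless only because $\mdcut(G) = \Omega(m)$, and one must respect the one-sided convention $X' \in \brac*{\alpha X, X}$, which is exactly why the $-C\eps m$ correction term is needed to prevent overshooting $\mdcut(G)$.
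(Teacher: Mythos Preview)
Your proposal is correct and follows essentially the same route as the paper: condition on the event of Lemma~\ref{lm:pseudobiascloseness}, pass from $\histps{G}$ to $\hist{G'}$, apply Lemma~\ref{lm:dicuttobias} to $G'$, and use $\abs{\mdcut(G)-\mdcut(G')}=\bO{\eps m}$. The one place you are more explicit than the paper is in invoking $\mdcut(G)\ge m/4$ to convert the residual additive $\bO{\eps m}$ on the lower side into a multiplicative $\bO{\eps}$ loss; the paper leaves this step implicit.
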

\begin{proof}
Suppose the probability $1 - \bO{\eps} - e^{-\bO{\eps^6\accuracy}}$ event of
Lemma~\ref{lm:pseudobiascloseness} holds. Then there exists a graph $G'$ that
differs from $G$ in only $\eps m$ edges such that \[
\abs{\histps{G} - \hist{G'}} = \bO{\eps m}
\]
and so \[
\abs{\rb^\dagger \histps{G} (1^\ell - \rb) - \rb^\dagger \hist{G'} (1^\ell - \rb)} = \bO{\eps m}
\]
which by Lemma~\ref{lm:dicuttobias} implies that \[
\alpha \mdcut\paren{G'} - \bO{\eps m} \le \rb^\dagger \histps{G} (1^\ell - \rb) \le \mdcut\paren{G'} + \bO{\eps m}.
\]
As $G'$ only differs from $G$ in $\bO{\eps m}$ edges, \[
\abs{\mdcut\paren{G'} - \mdcut\paren{G}} = \bO{\eps m}
\]
and so the result follows by setting $C$ to be a large enough constant that \[
\rb^\dagger \histps{G} (1^\ell - \rb) - C\eps m \le \mdcut\paren{G}.
\]
\end{proof}

\section{Quantum Algorithm for Pseudosnapshot Estimation}\label{section:pseudobias_quantum_algorithm}
In this section we describe our key quantum primitive: a small-space algorithm
for estimating the pseudosnapshot of a directed graph $G = (V,E)$, restricted
to edges $e = \dedge{uv}$ with $\db{e}_u \in \interval{d_i,d_{i+1}}$, $\db{e}_v
\in \interval{d_j, d_{j+1}}$ for some specified $i, j \in
\brac{\floor{\log_{1+\eps^3} n}} \cup \set{0}$.
\subsection{Algorithm}
\begin{restatable}{lemma}{maxcutlb}
\label{lm:dicut-sketch}
Let $\alpha, \beta \in \brac{\floor{\log_{1+\eps^3}  n}} \cup \set{0}$. Let
$\accuracy = \poly(n)$ be the integer accuracy parameter used in defining the
pseudobiases. Fix a draw of the hash functions $(f_i)_{i =
0}^{\ceil{\log_{1+\eps^3} n}}$, and therefore the pseudobiases of the graph
$G$.

Then there is a quantum streaming algorithm that, if $\sum_{e \in E}
(f_\alpha(e) + f_\beta(e)) \le 2\accuracy m $, returns an estimate of the
pseudosnapshot of $G$ restricted to edges $e = \dedge{uv}$ with $\db{e}_u \in
\interval{d_\alpha,d_{\alpha+1}}$, $\db{e}_v \in \interval{d_\beta,
d_{\beta+1}}$.

Each entry of the estimate has bias at most the number of edges
$\dedge{uv}$ such that:
\begin{enumerate}
\item $\db{\dedge{uv}}_u \in \interval{d_\alpha,
d_{\alpha+1}}$
\item $\db{\dedge{uv}}_v \in \interval{d_\beta, d_{\beta+1}}$
\item $\max\set*{\frac{\accuracy}{2d_{\alpha}} \doutbps{\dedge{uv}}_u + 1,
\frac{\accuracy}{2d_{\beta}} \doutbps{\dedge{uv}}_v + 1} > \accuracy$
\end{enumerate}

Each entry has variance $\bO{\accuracy^3m^2}$. The algorithm uses $\bO{\log n}$
qubits of space. 
\end{restatable}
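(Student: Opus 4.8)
The plan is to implement, edge by edge in the stream, the quantum ``labelling'' primitive sketched in Section~2.3, maintaining a superposition that simultaneously encodes, for every vertex $v$, its current degree and (a coarsening of) its out-degree via a counter register, together with a pool of scratch states $\Sc$ large enough to absorb all updates. I would first fix the two degree ranges $\interval{d_\alpha, d_{\alpha+1}}$ and $\interval{d_\beta, d_{\beta+1}}$, choose $d_H = d_\alpha$, $d_H' = d_{\alpha+1}$ (and $d_T = d_\beta$, $d_T' = d_{\beta+1}$, noting these are within a $(1+\eps^3)$ factor of each other), and describe the state as
\[
\frac{1}{\sqrt{M'}}\paren*{\Sc + \sum_{v \in V}\paren*{\sum_{i} \ket{vih} + \sum_{i}\ket{vit}}}
\]
where the index ranges are the ``sequence of intervals'' described in Section~2.3: the head register of $v$ carries one block of length $\min(\db{e}_v, d_H - 1)$ recording the degree, plus one extra block of length $d_H'$ appended (with probability $1/d_H$) each time an out-edge of $v$ is ``sampled'' by the relevant hash, so that the number of appended blocks equals $\frac{\accuracy}{2 d_\alpha}\doutbps{e}_v$ in expectation. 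Processing an incoming edge $\dedge{uv}$ consists of the shift maps $\ket{uih}\to\ket{u(i+d_H')h}$ etc.\ and swapping in fresh scratch states for the new low-index block, followed (if the probabilistic event fires) by the analogous out-degree update; I would verify that $2\accuracy m$ scratch states suffice (this is where the hypothesis $\sum_e (f_\alpha(e)+f_\beta(e)) \le 2\accuracy m$ is used), so $\bO{\log n}$ qubits suffice for all registers and counters.

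Next I would specify the measurement performed after each edge $\dedge{uv}$ arrives: the projectors onto
\[
\ket{u(d_H + i d_H')h} \pm \ket{v(d_T + j d_T')t}
\]
for $i,j \in \set{0,1,\dots,\kappa-1}$ (a constant family), together with ``cleanup'' operators and the complementary projector onto the rest of the space; as described, the $\kappa^2$ measurements are performed on $\kappa^2$ independently-maintained copies of the state so that the operators within each copy are orthogonal, and conditioned on returning the complementary projector each measurement deletes the two targeted basis states, leaving the superposition in exactly the form needed for future edges. The crucial bookkeeping claim is that, because the deletion shrinks $M'$ by precisely the amount that compensates for the probability that the quantum stage has already terminated, the probability of any given sampling outcome is independent of how many edges have been processed so far. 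When some $\ket{u(\cdot)h}\pm\ket{v(\cdot)t}$ is returned, the algorithm records the pair $(u,v)$, stops the quantum stage, and classically tracks $\douta{\dedge{uv}}_u, \da{\dedge{uv}}_u$ and the analogous quantities for $v$ for the rest of the stream; combined with the degree/out-degree data encoded in the collapsed state and the noise $g(u), g(v)$ drawn at the end, this gives $\bps{\dedge{uv}}_u$ and $\bps{\dedge{uv}}_v$, hence the pseudobias classes, hence which snapshot entry to increment. As in the one-way protocol, each ``$+$'' outcome contributes $+\frac{\text{(normalization)}}{k}$ and each ``$-$'' outcome $-\frac{\text{(normalization)}}{k}$, so fictitious pairs cancel in expectation and true pairs with both endpoints in the right degree ranges are counted with the correct expected multiplicity.

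Then comes the expectation/variance analysis. For the expectation I would compute, for a fixed edge $\dedge{uv}$ with $\db{\dedge{uv}}_u \in \interval{d_\alpha, d_{\alpha+1}}$ and $\db{\dedge{uv}}_v \in \interval{d_\beta, d_{\beta+1}}$, the probability that the measurement after its arrival returns each of the $2\kappa^2$ outcomes, and show that summing the signed contributions reproduces the indicator that $u$ and $v$ are in the intended pseudobias classes — \emph{unless} one of the appended-block counts is so large that it exceeds the top of the encoded interval range, i.e.\ unless $\max\set*{\frac{\accuracy}{2d_\alpha}\doutbps{\dedge{uv}}_u + 1, \frac{\accuracy}{2d_\beta}\doutbps{\dedge{uv}}_v + 1} > \accuracy$, in which case the estimator may mis-count that edge; these are exactly the ``bad'' edges enumerated in the statement, and bounding the bias by their number is then immediate. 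The variance bound $\bO{\accuracy^3 m^2}$ follows the pattern of the one-way calculation: each sampled pair contributes a value of magnitude $\bO{\accuracy m / k}$ (the normalization is $\bO{\accuracy m}$ and $k = \bT{1}$ since only a constant number of repetitions is folded into this primitive), there are $\bO{\kappa^2} = \bO{1}$ independent copies, and the second moment is dominated by the $\bO{\accuracy^2 m^2}$ squared-magnitude term times the $\bO{\accuracy}$ factor coming from the hash-sampling rate $\accuracy / 2d_i$ relative to the interval lengths. The main obstacle — and the step I expect to require the most care — is the second one: proving rigorously that the ``complementary projector'' branch leaves the superposition in precisely the claimed reduced form after \emph{every} edge (not just the first), so that all subsequent measurements remain well-defined and orthogonal, and that the $M'$-shrinkage exactly cancels the termination probability. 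This is the delicate invariant that makes the streaming implementation faithful to the one-way protocol, and everything downstream (expectation, variance, the role of the hypothesis on $\sum_e f_\alpha(e)+f_\beta(e)$) depends on getting it exactly right.
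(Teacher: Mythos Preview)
Your proposal follows the paper's approach in all the essentials: the scratch-state superposition, the shift-and-swap $\inc$ updates, the quantum-then-classical two-stage structure, and the identification of the $M'$-shrinkage/termination-probability cancellation (the paper's Lemma~\ref{lm:earlyterm}) as the key invariant. However, there is one structural gap.

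You maintain only a single register per vertex-role (head and tail) and measure with the family $\ket{u(d_H + i d_H')h} \pm \ket{v(d_T + j d_T')t}$. This lets you test whether $\db{\dedge{uv}}_u \ge d_\alpha$ and $\db{\dedge{uv}}_v \ge d_\beta$, and (via the block index $i,j$) read off the sampled out-degree counts. But nothing in your measurement family detects when the degree \emph{exceeds the upper threshold} $d_{\alpha+1}$ or $d_{\beta+1}$. The paper handles this by carrying \emph{four} register families $\Ac,\Bc,\Cc,\Dc$ rather than two: $\Ac$ caps the ``degree block'' at $d_\alpha$ while $\Bc$ caps it at $d_{\alpha+1}$ (and likewise $\Cc,\Dc$ for $\beta$), and for each $(i,j)$ the measurement produces four vectors $\nu^1,\nu^2,\nu^3,\nu^4$ whose signed contributions $(+,-,-,+)$ implement the inclusion--exclusion
\[
\bigl(\mathbbm{1}\brac{\db{}_u \ge d_\alpha} - \mathbbm{1}\brac{\db{}_u \ge d_{\alpha+1}}\bigr)\cdot\bigl(\mathbbm{1}\brac{\db{}_v \ge d_\beta} - \mathbbm{1}\brac{\db{}_v \ge d_{\beta+1}}\bigr).
\]
This is precisely what drives Lemma~\ref{lm:wrongdegreecontribution}: when an endpoint's degree is below the lower threshold or at least the upper one, the state-invariant (Lemma~\ref{lm:stinv}) shows that the $\Ac$ and $\Bc$ registers (resp.\ $\Cc$ and $\Dc$) agree at all the measured positions, so $\nu^1$ and $\nu^2$ (resp.\ $\nu^3$ and $\nu^4$) have equal overlap with the state and their opposite-sign contributions cancel. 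Without the four-fold structure your estimator would also pick up edges whose endpoints have degree $\ge d_{\alpha+1}$ or $\ge d_{\beta+1}$, and the bias bound in the lemma statement would not follow. Your expectation analysis explicitly restricts to edges already in the correct degree range and never argues that out-of-range edges contribute zero; that is exactly the missing piece, and it is not recoverable from a single head/tail pair---you need the second pair of registers (and the corresponding doubling of the copy index to $2\accuracy^2$ to keep all $4\accuracy^2$ projectors orthogonal).
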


The algorithm will maintain a superposition \[
\frac{\ket{\Sc} + \sum_{i=1}^{2\accuracy^2}\paren*{\ket{\Ac^i} + \ket{\Bc^i} +
\ket{\Cc^i} + \ket{\Dc^i}}}{\sqrt{M'}}
\]
where $M' \le M = C\accuracy^3m$ for some sufficiently large constant $C$. $M'$ will
start at $M$ and decrease as measurements remove states from the superposition.

A key primitive for maintaining this state will be ``swapping'', in which we
execute the unitary that swaps two named basis states while leaving the rest of
the Hilbert space unchanged. 

$\ket{\Sc}$ will be our ``scratch states''. It is equal to \[
\sum_{i=t}^M \ket{is}
\]
where $t$ starts at $1$ and is incremented every time we use a scratch state
(by swapping it with some other state we want), and the last register $s$ 
indicates that this is a scratch state. The algorithm will keep track of $t$ 
so that it knows which state to swap from.

The remaining vectors encode information about vertices and their degrees. For
$\Ec = \Ac, \Bc, \Cc, \Dc$, each takes the form
\begin{align*}
\ket{\Ec^i} &= \sum_{v \in V}\ket{\Ec_v^i}\\
\ket{\Ec_v^i} &= \sum_{j \in E_v}\ket{vje}
\end{align*}
where the sets $E_v = A_v, B_v, C_v, D_v$ are not explicitly stored by the
algorithm but are updated by updating the superposition, and $e = a^i, b^i,
c^i, d^i$ marks which of $\Ac^i, \Bc^i, \Cc^i, \Dc^i$ a state belongs to, for
$i \in \brac{2\accuracy^2}$. Note that the underlying set $E_v$ for each
$\Ec_v^i$ does not depend on $i$---the $2\accuracy^2$ copies are to allow us to
perform a larger set of measurements, and the states will remain identical
until the algorithm terminates.  The four non-scratch components are broken
into two pairs (as each vertex can be both a head and a tail, and the two cases
need to be handled separately), with each pair having an element for checking
whether degrees are high enough, and one for checking whether degrees are low
enough (both will track whether vertices have had enough edges coming out of
them that pass the hash functions).
\begin{itemize}
\item $\Ac^i$ are for tracking vertices with degree at least $d_\alpha$.
\item $\Bc^i$ are for tracking when those vertices have their degree exceed
$d_{\alpha+1}$.
\item $\Cc^i$ are for tracking vertices with degree at least $d_\beta$.
\item $\Dc^i$ are for tracking when those vertices have their degree exceed
$d_{\beta+1}$.
\end{itemize}
At the start of the execution of our algorithm, $A_v, B_v, C_v, D_v =
\emptyset$. They will be updated with the following three operations:
\begin{itemize}
\item $\inc(\Ec, v, r)$ replaces $E_v$ with $\set{i + r : i \in E_v} \cup
\brac{r}$, and replaces $t$ with $t + 2\accuracy^2r$. Note that this can be accomplished
with a single unitary transformation, by sending $\ket{vie^j}$ to $\ket{v(i+r)e^j}$
for each $i,j$, and then swapping the first $2\accuracy^2r$ remaining elements of $\Sc$ with
$\sum_{i=1}^r \ket{vie^j}$ for each $j \in \brac{2\accuracy^2}$.
\item $\measure(u,v)$ measures the superposition with projectors onto the
following vectors for each $(i,j) \in \brac{\accuracy}^2$, and $b \in \bool$, along
with a projection onto the complement of the space spanned by the projectors.
\begin{align*}
\ket{\nu_{i,j,b}^1} &= \frac{\ket{u(d_\alpha + (i-1)d_{\alpha+1})a^{r_{i,j}}} + (-1)^b
\ket{v(d_\beta + (j-1)d_{\beta+1})c^{r_{i,j}}}}{\sqrt{2}}\\
\ket{\nu_{i,j,b}^2} &= \frac{\ket{u(id_{\alpha+1})b^{r_{i,j}}} + (-1)^b \ket{v(d_\beta
+ (j-1)d_{\beta+1})c^{s_{i,j}}}}{\sqrt{2}}\\
\ket{\nu_{i,j,b}^3} &= \frac{\ket{u(d_\alpha + (i-1)d_{\alpha+1})a^{s_{i,j}}} + (-1)^b
\ket{v(jd_{\beta+1})d^{r_{i,j}}}}{\sqrt{2}}\\
\ket{\nu_{i,j,b}^4} &= \frac{\ket{u(id_{\alpha+1})b^{s_{i,j}}} + (-1)^b
\ket{v(jd_{\beta+1})d^{s_{i,j}}}}{\sqrt{2}}
\end{align*} 
Where the $s_{i,j}$, $r_{i,j}$ are chosen so that these vectors are all
orthogonal to one another across all $i,j,b$ (note that this is possible
because we have $\accuracy^2$ possible values to choose from). If the result of
this measurement is anything other than the complementary projection, the
quantum part of the algorithm will terminate and the remaining execution will
be entirely classical.
\item $\cleanup(u,v)$ measures the superposition with projectors onto
$\ket{w(d_\alpha + id_{\alpha+1})a^{j}}$, $\ket{w((i+1)d_{\alpha+1})b^{j}}$,
$\ket{w(d_\beta + id_{\beta+1})c^{j}}$, and $\ket{w((i+1)d_{\beta+1})d^{j}}$,
for all $i \in \brac{M}$, $j \in \brac{2\accuracy^2}$, and $w = u,v$, along with a
projector onto the complement of the space they span. If anything other than
the complementary projector is returned, the algorithm halts entirely and
outputs a zero estimate for the pseudosnapshot.

Together, the effect of performing $\measure(u,v)$ and $\cleanup(u,v)$, if they
do \emph{not} return something other than the complementary projectors, is to
delete the following elements from $A_w$, $B_w$, $C_w$, $D_w$, for $w = u,v$
and for all $i \in \brac{M}$ (note that these elements may have not been
present to begin with, or may be ``removed'' multiple times between the two
operations---this does not cause any issues):
\begin{itemize}
\item $d_\alpha + (i-1)d_{\alpha + 1}$ from $A_w$.
\item $id_{\alpha + 1}$ from $B_w$.
\item $d_\beta + (i-1)d_{\alpha + 1}$ from $C_w$.
\item $id_{\beta + 1}$ from $D_w$.
\end{itemize}
\end{itemize}
We can now describe the algorithm.

\paragraph{Quantum Stage} For each $\dedge{uv}$ processed until the quantum
stage terminates:
\begin{enumerate}
\item $\inc(\Ec,w,1)$ for $w = u,v$, and $\Ec = \Ac, \Bc, \Cc, \Dc$.
\item If $f_\alpha(\dedge{uv}) = 1$, $\inc(\Ac,u,d_{\alpha+1})$ and
$\inc(\Bc,u,d_{\alpha+1})$.
\item If $f_\beta(\dedge{uv}) = 1$, $\inc(\Cc,u,d_{\beta + 1})$ and
$\inc(\Dc,u,d_{\beta+1})$.
\item $\measure(u,v)$. If the measurement returns $\ket{\nu_{i,j,b}^r}$, pass it
along with $u,v$ to the classical stage and continue.
\item $\cleanup(u,v)$. If the measurement returns anything other than the
projector onto the complement of the cleanup vectors, immediately terminate the
algorithm, outputting an all-zeroes estimate.
\end{enumerate}
If the quantum stage processes every edge without being terminated by a
measurement outcome, output an all-zeroes estimate and skip the classical
stage.
\paragraph{Classical Stage} 
For the remainder of the stream, track $\douta{e}_u$,$\douta{e}_v$,
$\da{e}_u$,$\da{e}_v$ (giving us exact values for these variables). Then
estimate $\db{e}_u$, $\db{e}_v$ by assuming that they are equal to $d_\alpha$,
$d_\beta$, respectively. Then estimate $\doutbps{e}_u$, $\doutbps{e}_v$, by
assuming that the number of edges $e$ with head $u$ and $f_\alpha(e) = 1$ is
$i-1$, and the number with head $v$ and $f_\beta(e) = 1$ is $j-1$.

Combine these estimates and evaluate $g(u)$, $g(v)$ to estimate $\bps{e}_u$ and
$\bps{e}_v$. If $r = 1$ or $4$, set $(-1)^bM/2$ as the corresponding entry of
the pseudosnapshot estimate (with every other entry as $0$). If $r = 2$ or $3$,
set it as $(-1)^{\overline{b}}M/2$ instead. 
\subsection{Analysis}
\subsubsection{State Invariant}
\begin{lemma}
\label{lm:stinv}
Consider any time after some number of edges have been (completely) processed
in the quantum stage, and suppose the state has not yet terminated, and $t$ has
not exceeded $M$. Let $v \in V$, and let $r$ be the number of those edges that
were incident to $v$. Let $R$ be the number of those edges $e$ such that $v$
was the head of the edge, and $f_\alpha(e) = 1$. 

Then, if $R = 0$, 
\begin{align*}
A_v &= \Nbb \cap \interval{1,\min(r+1,d_\alpha)}\\
B_v &= \Nbb \cap \interval{1,\min(r+1,d_{\alpha +1})})
\end{align*}
and if $R > 0$, there exists $(\prefix_i)_{i=1}^R \in \brac{r}^R$ such that:
\begin{align*}
A_v &= \Nbb \cap \paren*{\interval{1,d_\alpha} \cup \bigcup_{i=1}^R I_i}\\
B_v &= \Nbb \cap \paren*{\interval{1,d_{\alpha+1}} \cup \bigcup_{i=1}^R J_i}
\end{align*}
Where
\begin{align*}
I_i &= \begin{cases}
\interval{d_\alpha+(i-1)d_{\alpha+1}+\prefix_i,d_\alpha + id_{\alpha + 1}} & \mbox{i
< R}\\
\interval{d_\alpha + (R-1)d_{\alpha + 1} + \prefix_R, Rd_{\alpha + 1}
+ \min(r+1, d_{\alpha})} & \mbox{i = R}
\end{cases}\\
J_i &= \begin{cases}
\interval{id_{\alpha+1}+\prefix_i,(i+1)d_{\alpha + 1}} & \mbox{i
< R}\\
\interval{Rd_{\alpha + 1} + \prefix_R, Rd_{\alpha + 1}
+ \min(r+1, d_{\alpha+1})} & \mbox{i = R}
\end{cases}
\end{align*}
The same relationship holds for $C_v$ and $D_v$, except with $\beta$ instead of $\alpha$.
\end{lemma}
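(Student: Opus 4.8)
The plan is to prove Lemma~\ref{lm:stinv} by induction on the number of edges that have been completely processed in the quantum stage. Throughout I work under the hypotheses of the statement: no $\measure$ or $\cleanup$ executed so far has returned anything other than its complementary projector (so each such step had exactly the deletion effect recorded in the algorithm specification, and the quantum stage is still running), and $t$ has never exceeded $M$ (so each $\inc$ executed so far had exactly its stated shift-and-prepend effect, its swaps always landing on genuine scratch states; note $t$ is monotone, so $t \le M$ at the current time forces this at all earlier times). Since the four underlying index sets start out identical across the $2\accuracy^2$ copies and each of $\inc$, $\measure$, $\cleanup$ is specified to preserve this, I will track a single set for each of $A_v, B_v, C_v, D_v$ and verify the claimed description for a single fixed vertex $v$.

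For the base case (zero edges, $r = R = 0$, all four sets empty), the claim holds because $d_\alpha, d_{\alpha+1} \ge 1$, so $\Nbb \cap \interval{1,\min(1,d_\alpha)} = \Nbb \cap \interval{1,\min(1,d_{\alpha+1})} = \emptyset$. For the inductive step, assume the description holds after $k$ edges and process edge $\dedge{uv'}$. If $v \notin \set{u,v'}$, then none of steps 1--5 touches a $\ket{v\,\cdot\,\cdot}$ basis state, so all four sets and $r, R$ are unchanged. Otherwise $r$ increases by one, and I split into case (a): $v = v'$, or $v = u$ with $f_\alpha(\dedge{uv'}) = 0$, so only $\inc(\cdot,v,1)$ acts on $A_v$ and $B_v$ and $R$ is unchanged; and case (b): $v = u$ with $f_\alpha(\dedge{uv'}) = 1$, so $\inc(\cdot,v,1)$ and then $\inc(\cdot,u,d_{\alpha+1})$ act and $R$ increases by one. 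The sets $C_v, D_v$ are driven in exactly the same way by $f_\beta$ and a separate counter $R_\beta$, live in disjoint registers, and are handled verbatim with $\beta$ (and $R_\beta$, and an independent prefix sequence) in place of $\alpha$ and $R$, so it is enough to treat $A_v$ and $B_v$.

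The core of the argument is the bookkeeping in cases (a) and (b): starting from the inductive form of $A_v$ (and $B_v$), apply the shift by $1$ and prepend of $\set{1}$ from $\inc(\cdot,v,1)$, then, in case (b), the shift by $d_{\alpha+1}$ and prepend of $\brac{d_{\alpha+1}}$ from $\inc(\cdot,u,d_{\alpha+1})$, and finally remove from $A_v$ the progression $\set{d_\alpha + (i-1)d_{\alpha+1} : i \ge 1}$ and from $B_v$ the progression $\set{id_{\alpha+1} : i \ge 1}$ (the combined effect of $\measure$ and $\cleanup$ both returning their complementary projectors), and check the result is again of the claimed form with an updated prefix sequence. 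In case (a), the shift by $1$ absorbs into each old interval $I_i$ with $i < R$ exactly the element $d_\alpha + id_{\alpha+1}$ just past its old top, and into the base interval the element $d_\alpha$; the subsequent deletion removes precisely those elements, so every right endpoint settles where the new values of $r$ and $\min(r+1,d_\alpha)$ prescribe while every left endpoint moves up by one, and the new prefix is $(\prefix_1+1,\dots,\prefix_R+1) \in \brac{r+1}^R$. In case (b), the fresh block $\brac{d_{\alpha+1}}$ prepended by $\inc(\cdot,u,d_{\alpha+1})$ merges with the shifted copy of the base interval into a single run of consecutive integers, the $d_{\alpha+1}$-shift turns each old $I_i$ into the prospective new $I_{i+1}$, and the deletion of $d_\alpha, d_\alpha + d_{\alpha+1}, d_\alpha + 2d_{\alpha+1}, \dots$ re-separates all the blocks; the new prefix is $(1, \prefix_1+1,\dots,\prefix_R+1) \in \brac{r+1}^{R+1}$. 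The same sequence simultaneously witnesses the claimed form of $B_v$, which undergoes the identical shifts and the analogous deletion pattern with $d_{\alpha+1}, 2d_{\alpha+1}, \dots$ replacing $d_\alpha, d_\alpha + d_{\alpha+1}, \dots$.

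I expect the main obstacle to be getting this boundary bookkeeping exactly right in all the degenerate regimes simultaneously. Three places need care: (i) the ``saturation'' once $r+1 \ge d_\alpha$ (resp.\ $r+1 \ge d_{\alpha+1}$), where the top of the last interval $I_R$ (resp.\ $J_R$) stops advancing precisely because the element just above it is deleted at every step, which is exactly what keeps the $\min(r+1,d_\alpha)$ in the statement correct; (ii) the merge-and-resplit of the base interval with the newly prepended block in case (b), which is the reason the statement keeps each $I_i$ and $J_i$ half-open at the top rather than closed; and (iii) checking that no spurious merges occur between consecutive intervals, which holds because every prefix entry is at least $1$, so the deleted element $d_\alpha + (i-1)d_{\alpha+1}$ (resp.\ $id_{\alpha+1}$) always lies in the gap just below $I_i$ (resp.\ $J_i$), with the base interval playing the role of $I_0$. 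Once these are pinned down for $A_v$, the corresponding checks for $B_v$, and the whole argument for $C_v, D_v$ with $\beta$ in place of $\alpha$, follow with only notational changes.
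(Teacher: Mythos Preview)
Your proposal is correct and follows essentially the same approach as the paper's proof: induction on the number of processed edges, with the inductive step split by whether the arriving edge is incident to $v$ and whether it triggers $f_\alpha$, tracking the effect of $\inc$ followed by the combined $\measure$/$\cleanup$ deletions. The only organizational difference is that the paper keeps the $R=0$ and $R>0$ situations as separate sub-cases (five cases in all), whereas you fold them into just cases (a) and (b); this is harmless, though when you write out the details you should note that for $R=0$ with $r+1<d_\alpha$ the ``absorb $d_\alpha$ then delete it'' narrative degenerates (the shifted base interval never reaches $d_\alpha$, so the deletion is vacuous) yet the conclusion $[1,\min(r+2,d_\alpha))$ still holds.
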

\begin{proof} We will prove the result for $A_v$ and $B_v$. The proof for $C_v$ 
and $D_v$ is identical, with $\beta$ substituted for $\alpha$. Note that when
describing the updates performed on seeing an edge $\dedge{uv}$ or $\dedge{vu}$
we will ignore the updates that only touch $u$, as they have no effect on the
sets we are analyzing here.

We prove this result by induction. First, we consider $R = 0$ and $r = 0$ as
the base case. Next, we prove the inductive step where $r$ is increased by 1
while $R = 0$. This finishes the proof for $R = 0$ and any $r$. Next, we prove
the result for $R = 1$ by considering the step when $R$ is increased from $R =
0$. Note that any update that increases $R$ also increases $r$, so in this case
$r$ is also increased by 1. Last, we fix any $r$ and any $R > 0$ and prove the
inductive step when $r$ is increased by 1, but $R$ stays unchanged, and when
both $r$ and $R$ are increased by 1, which completes the proof.

We start with the case where $R = 0$ and $r =
0$. Then the statement is equivalent to $A_v = B_v = \emptyset$, which follows from
how we have defined the initial state of the algorithm, and the fact that edges
not incident to $v$ do not result in updates that affect $A_v$ or $B_v$. 

Now suppose the result holds for $r$ (with $R$ still $0$), and consider the $(r
+ 1)\nth$ edge incident to $v$ to arrive (with no $\dedge{vw}$ such that
$f(\dedge{vw}) = 1$ having arrived yet). Then the update consists of performing
$\inc(\Ac,v,1)$, $\inc(\Bc,v,1)$, and then $\measure(v, w)$ or $\measure(w,v)$
for some $w$, followed by $\cleanup(v,w)$ or $\cleanup(w,v)$. Before the
measurement, this gives us
\begin{align*}
A_v &= \Nbb \cap \interval{1,\min(r+2,d_\alpha + 1)}\\
B_v &= \Nbb \cap \interval{1,\min(r+2,d_{\alpha + 1} + 1)})
\end{align*}
and as the condition of the lemma supposes that the measurement did not result
in the algorithm terminating, it will have removed $d_\alpha$ from $A_v$ and
$d_{\alpha+1}$ from $B_v$, and so 
\begin{align*}
A_v &= \Nbb \cap \interval{1,\min(r+2,d_\alpha)}\\
B_v &= \Nbb \cap \interval{1,\min(r+2,d_{\alpha + 1})})
\end{align*}
as desired.

Next we will prove the result for $R$ increased from $0$ to $1$ and $r$ increased by $1$. By the previous section, when $R = 0$ the state is
\begin{align*}
A_v &= \Nbb \cap \interval{1,\min(r+1,d_\alpha)}\\
B_v &= \Nbb \cap \interval{1,\min(r+1,d_{\alpha + 1})})\text{.}
\end{align*}
When the $(r+1)\nth$ edge incident to $v$ arrives, if it is also the first edge
with head $v$ in $f_{\alpha}^{-1}(1)$, the update consists of performing
$\inc(\Ac,v,1)$, $\inc(\Bc,v,1)$, $\inc(\Ac,v,d_{\alpha + 1})$,
$\inc(\Bc,v,d_{\alpha + 1})$, and then $\measure(v, w)$ for
some $w$, followed by $\cleanup(v,w)$. 

After the $\inc$ operations are performed we have 
\begin{align*}
A_v &= \Nbb \cap \interval{1,\min(r+d_{\alpha + 1} + 2,d_\alpha + d_{\alpha +
1} + 1)}\\
B_v &= \Nbb \cap \interval{1,\min(r+d_{\alpha+1} + 2,2d_{\alpha + 1} +
1)})
\end{align*}
and then after performing the measurements we remove $d_\alpha$ and $d_\alpha +
d_{\alpha+1}$ from $A_v$ and $d_{\alpha + 1}$ and $2d_{\alpha + 1}$ from $B_v$,
so we have
\begin{align*}
A_v &= \Nbb \cap (\interval{1,d_\alpha} \cup \interval{d_\alpha + 1,
\min(r+d_{\alpha + 1} + 2,d_\alpha + d_{\alpha + 1})})\\
B_v &= \Nbb \cap (\interval{1,d_{\alpha+1}} \cup \interval{d_{\alpha+1} + 1,
\min(r+d_{\alpha+1} + 2,2d_{\alpha + 1})}))
\end{align*}
which matches the lemma statement by setting $\prefix_1 = \prefix_R = 1$.

We now need to consider two more cases to complete the proof: First, the case when we have the result for
some $R > 0$ and $r$, and the update consists of the $(r+1)\nth$ edge incident
to $v$, and this edge is \emph{not} in $f_{\alpha}^{-1}(1)$ or does not have
$v$ as its head.  Second, the case when we have the result for $R$ and $r$ and
the update consists of the $(r+1)\nth$ edge incident to $v$, which is also the
$(R+1)\nth$ edge with head $v$ in $f_{\alpha}^{-1}(1)$.

Let us deal with the cases when $r$ is increased from $r$ to $r+1$ and $R$ stays the same first. The update consists of
performing $\inc(\Ac,v,1)$, $\inc(\Bc,v,1)$, and then $\measure(v, w)$ or
$\measure(w,v)$ for some $w$, followed by $\cleanup(v,w)$ or
$\cleanup(w,v)$. The state before the update, by the inductive hypothesis, is
\begin{align*}
A_v &= \Nbb \cap \paren*{\interval{1,d_\alpha} \cup \bigcup_{i=1}^R I_i}\\
B_v &= \Nbb \cap \paren*{\interval{1,d_{\alpha+1}} \cup \bigcup_{i=1}^R J_i}
\end{align*}
with the intervals $I_i$ and $J_i$ as defined in the lemma statement. The
$\inc(\Ac, v, 1)$ and $\inc(\Bc, v, 1)$ operations, then, correspond to
replacing these intervals with
\begin{align*}
I_i' &= \begin{cases}
\interval{d_\alpha+(i-1)d_{\alpha+1}+\prefix_i + 1,d_\alpha + id_{\alpha + 1} + 1} & \mbox{i
< R}\\
\interval{d_\alpha + (R-1)d_{\alpha + 1} + \prefix_R + 1, Rd_{\alpha + 1}
+ \min(r+2, d_{\alpha} + 1) } & \mbox{i = R}
\end{cases}\\
J_i' &= \begin{cases}
\interval{id_{\alpha+1}+\prefix_i + 1,(i+1)d_{\alpha + 1} + 1} & \mbox{i
< R}\\
\interval{Rd_{\alpha + 1} + \prefix_R + 1, Rd_{\alpha + 1}
+ \min(r+2, d_{\alpha+1} + 1)} & \mbox{i = R}
\end{cases}
\end{align*}
and $\interval{1,d_\alpha}$ with $\interval{1,d_\alpha + 1}$, and
$\interval{1,d_{\alpha+1}}$ with $\interval{1,d_{\alpha+1}+1}$.

The measurements then delete $d_\alpha + (i-1)d_\alpha$ from $A_v$ and
$id_\alpha$ from $B_v$ for every $i \in [M]$. So $\interval{1,d_\alpha + 1}$ and
$\interval{1,d_{\alpha+1}}$ return to $\interval{1,d_\alpha}$,
$\interval{1,d_{\alpha+1}}$, respectively, and the intervals $I_i$, $J_i$
become
\begin{align*}
I_i'' &= \begin{cases}
\interval{d_\alpha+(i-1)d_{\alpha+1}+\prefix_i + 1,d_\alpha + id_{\alpha + 1}} & \mbox{i
< R}\\
\interval{d_\alpha + (R-1)d_{\alpha + 1} + \prefix_R + 1, Rd_{\alpha + 1}
+ \min(r+2, d_{\alpha}) } & \mbox{i = R}
\end{cases}\\
J_i'' &= \begin{cases}
\interval{id_{\alpha+1}+\prefix_i + 1,(i+1)d_{\alpha}} & \mbox{i
< R}\\
\interval{Rd_{\alpha + 1} + \prefix_R + 1, Rd_{\alpha + 1}
+ \min(r+2, d_{\alpha+1})} & \mbox{i = R}
\end{cases}
\end{align*}
and so by setting the new $I_i$ to be $I_i''$ and likewise with $J_i$, the
states are in the form desired (by incrementing every element of
$(\prefix_i)_{i=1}^R$ by 1), as $r$ is now $1$ larger.

Finally, we consider the inductive step where both $R$ and $r$ are increased by
$1$. This means the arriving edge is the $(r+1)\nth$ edge incident to $v$,
which is also the $(R+1)\nth$ edge with head $v$ in $f^{-1}_\alpha(1)$.

 The update consists of performing $\inc(\Ac,v,1)$, $\inc(\Bc,v,1)$,
 $\inc(\Ac,v,d_{\alpha + 1})$, $\inc(\Bc,v,d_{\alpha + 1})$, and then
 $\measure(v, w)$ for some $w$, followed by $\cleanup(v,w)$. The state before
 the update, by the inductive hypothesis, is
\begin{align*}
A_v &= \Nbb \cap \paren*{\interval{1,d_\alpha} \cup \bigcup_{i=1}^R I_i}\\
B_v &= \Nbb \cap \paren*{\interval{1,d_{\alpha+1}} \cup \bigcup_{i=1}^R J_i}
\end{align*}
with the intervals $I_i$ and $J_i$ as defined in the lemma statement. As $\inc$
operations add together, after all the increment operations we have replaced
$\interval{1,d_\alpha}$ with $\interval{1,d_\alpha + d_{\alpha + 1} + 1}$ and
$\interval{1,d_{\alpha+1}}$ with $\interval{1,2d_{\alpha+1}  + 1}$ and $I_i$,
$J_i$ with 
\begin{align*}
I_i' &= \begin{cases}
\interval{d_\alpha+id_{\alpha+1}+\prefix_i + 1,d_\alpha + (i+1)d_{\alpha + 1} + 1}
& \mbox{i < R}\\
\interval{d_\alpha + Rd_{\alpha + 1} + \prefix_R + 1, (R+1)d_{\alpha + 1}
+ \min(r+2, d_{\alpha} + 1) } & \mbox{i = R}
\end{cases}\\
J_i' &= \begin{cases}
\interval{(i+1)d_{\alpha+1}+\prefix_i + 1,(i+2)d_{\alpha + 1} + 1} & \mbox{i
< R}\\
\interval{(R+1)d_{\alpha + 1} + \prefix_R + 1, (R+1)d_{\alpha + 1}
+ \min(r+2, d_{\alpha+1} + 1)} & \mbox{i = R}
\end{cases}
\end{align*}
and so after the measurements, as they remove $d_{\alpha} + (i-1)d_{\alpha +
1}$ from $A_v$ and $id_{\alpha + 1}$ from $B_v$ for all $i$,
$\interval{1,d_\alpha + d_{\alpha + 1} + 1}$ becomes $\interval{1,d_\alpha}
\cup \interval{d_\alpha + 1, d_\alpha + d_{\alpha + 1}}$ and
$\interval{1,2d_{\alpha+1}  + 1}$ becomes $\interval{1,d_{\alpha+1}} \cup
\interval{d_{\alpha + 1}  + 1, 2d_{\alpha +1}}$. $I_i'$ and $J_i'$ become
\begin{align*}
I_i'' &= \begin{cases}
\interval{d_\alpha+id_{\alpha+1}+\prefix_i + 1,d_\alpha + (i+1)d_{\alpha + 1}}
& \mbox{i < R}\\
\interval{d_\alpha + Rd_{\alpha + 1} + \prefix_R + 1, (R+1)d_{\alpha + 1}
+ \min(r+2, d_{\alpha}) } & \mbox{i = R}
\end{cases}\\
J_i'' &= \begin{cases}
\interval{(i+1)d_{\alpha+1}+\prefix_i + 1,(i+2)d_{\alpha + 1}} & \mbox{i
< R}\\
\interval{(R+1)d_{\alpha + 1} + \prefix_R + 1, (R+1)d_{\alpha + 1}
+ \min(r+2, d_{\alpha+1})} & \mbox{i = R}
\end{cases}
\end{align*}
and so the state is of the form required, by setting (writing $\prefix_i'$ for the
old $\prefix_i$, and noting that this means $\prefix_i \le r+1$ for all $i$)
\begin{align*}
I_i &= \begin{cases}
\interval{d_\alpha + 1, d_\alpha + d_{\alpha + 1}} & \mbox{$i = 1$}\\
I_{i-1}'' &\mbox{$1 < i \le R+1$}
\end{cases}\\
J_i &= \begin{cases}
\interval{d_{\alpha + 1}  + 1, 2d_{\alpha +1}} & \mbox{$i = 1$}\\
J_{i-1}'' &\mbox{$1 < i \le R+1$}
\end{cases}\\
\prefix_i &= \begin{cases}
1 & \mbox{i = 1}\\
\prefix_{i - 1}' + 1 & \mbox{$1 < i \le R+1$}
\end{cases}
\end{align*}
which completes the proof.
 \end{proof}

\subsubsection{Measurement Outcomes}
In this section we characterize the expected effect on the pseudosnapshot
estimate from all of the measurement outcomes that can terminate the quantum
stage of the algorithm: those other than the residual projector in $\measure$
and $\cleanup$.
\begin{lemma}
For all $u,v$, the expected contribution of $\cleanup(u,v)$ to every entry of
the pseudosnapshot estimate is $0$.
\end{lemma}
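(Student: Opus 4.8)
The plan is to read the claim off directly from the specification of $\cleanup(u,v)$ and the way the algorithm produces its output. First I would recall what a $\cleanup(u,v)$ call does: it measures the maintained superposition against the rank-one projectors onto the basis states $\ket{w(d_\alpha+id_{\alpha+1})a^j}$, $\ket{w((i+1)d_{\alpha+1})b^j}$, $\ket{w(d_\beta+id_{\beta+1})c^j}$, $\ket{w((i+1)d_{\beta+1})d^j}$ ranging over $i\in\brac{M}$, $j\in\brac{2\accuracy^2}$, and $w\in\set{u,v}$, together with the projector onto the orthogonal complement of their span; and that if the outcome is anything other than that complementary projector, the algorithm halts immediately and outputs the all-zeroes pseudosnapshot estimate. (The only runs that ever write a nonzero value into an entry of the estimate are those that reach the classical stage via a non-complementary $\measure$ outcome.)

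Next I would make precise what ``the contribution of $\cleanup(u,v)$'' means. On any run, the pseudosnapshot estimate is determined by which measurement outcome, if any, terminates the quantum stage --- the terminating outcomes being exactly the non-complementary outcomes of the $\measure$ and $\cleanup$ calls --- so one may decompose each entry of the estimate as a sum of terms indexed by these potential terminating events, the term for $\cleanup(u,v)$ being the (random) value of that entry on the event that this particular $\cleanup$ call returns a non-complementary outcome, and $0$ on the complement of that event. Since the terminating events are mutually exclusive across calls, this is a legitimate decomposition and it suffices to bound the $\cleanup(u,v)$ term.

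The rest is immediate: on any run on which $\cleanup(u,v)$ returns a non-complementary outcome, the algorithm halts with the all-zeroes estimate --- this overrides whatever a preceding $\measure$ call may have forwarded to the classical stage --- so every entry of the estimate is $0$ on that event. Hence the contribution of $\cleanup(u,v)$ to every entry is identically $0$ as a random variable, and in particular its expectation is $0$.

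I do not expect a real obstacle; the only care needed is bookkeeping --- checking that the decomposition into per-outcome contributions is well defined (each run terminates at a unique measurement, or at none) and that the ``all-zeroes on $\cleanup$ termination'' rule is genuinely never circumvented, both of which follow directly from the stated control flow of the Quantum and Classical stages. Combined with the companion analysis of the $\measure$ outcomes, this lemma lets us replace the expectation of each entry of the pseudosnapshot estimate by the sum of the $\measure$ contributions alone.
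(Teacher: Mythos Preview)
Your proposal is correct and follows the same reasoning as the paper, which dispatches the lemma in one line: termination due to $\cleanup$ always produces the all-zeroes estimate, so its contribution is identically zero. Your remark about ``overriding whatever a preceding $\measure$ call may have forwarded'' is unnecessary, since the control flow guarantees that if any earlier $\measure$ returned a non-complementary outcome the quantum stage would already have terminated and $\cleanup(u,v)$ would never be reached; but this does not affect the validity of your argument.
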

\begin{proof}
This follows immediately from the fact that we do not modify the estimate when
the algorithm terminates due to $\cleanup$.
\end{proof}
In order to analyze the expectation of the estimate output by the algorithm, we
will need the following lemma about the probability of the algorithm
terminating before a certain point. Note that for any stream of edges, the
sequence of measurements performed by the algorithm is deterministic, except
that the sequence may be terminated early depending on the measurement results.
\begin{lemma}
\label{lm:earlyterm}
For any $k \in \Nbb$, let $p_k$ be the probability that the algorithm
terminates in the first $k$ measurements performed by the algorithm. Then, if
the algorithm does not terminate, the value of $M'$ after these measurements is
$(1-p_k)M$.
\end{lemma}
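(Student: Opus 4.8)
The plan is to track the squared norm of the \emph{unnormalized} maintained state — the numerator of the superposition — through the execution, and observe that it telescopes into exactly the survival probability. Conditioned on the algorithm not having terminated during its first $k$ measurements, write $\ket{\phi_k}$ for this unnormalized numerator immediately after the $k\nth$ measurement, so $\ket{\phi_0} = \sum_{i=1}^M \ket{is}$ with $\norm*{\ket{\phi_0}}^2 = M$; since $M'$ is by definition the normalization making $\tfrac{1}{\sqrt{M'}}\ket{\phi_k}$ a unit vector, the value of $M'$ after $k$ measurements is $\norm*{\ket{\phi_k}}^2$. Note that, because the edge stream and the hash draw are fixed, the list of operations the algorithm performs — the $\inc$ unitaries interleaved with the projective measurements $\measure$ and $\cleanup$ — is deterministic, the only source of randomness being the measurement outcomes and the only branching being whether the algorithm terminates early; so ``the $j\nth$ measurement'' is well defined, and conditioned on survival $\ket{\phi_k}$ is a fixed product of projectors and unitaries applied to $\ket{\phi_0}$, hence the target $(1-p_k)M$ is a deterministic quantity.

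Two standard facts then finish it. First, each operation strictly between consecutive measurements is an $\inc$, which only permutes basis states (shift the counters, then swap in fresh scratch states), so it is a unitary $U$ with $\norm*{U\ket{\phi}} = \norm*{\ket{\phi}}$. Second, each $\measure$ or $\cleanup$ is a projective measurement whose outcomes are several rank-one projectors together with the complementary projector $P$ onto the orthogonal complement of their span, and ``not terminating at this measurement'' is exactly the outcome $P$. Writing the state just before the $j\nth$ measurement as $U_j\ket{\phi_{j-1}}$ (with $U_j$ the intervening $\inc$ unitaries), the Born rule says that, conditioned on reaching this measurement, the outcome $P_j$ occurs with probability $q_j = \norm*{P_j U_j\ket{\phi_{j-1}}}^2 / \norm*{U_j\ket{\phi_{j-1}}}^2$, and then the unnormalized state becomes $\ket{\phi_j} := P_j U_j\ket{\phi_{j-1}}$; since $\norm*{U_j\ket{\phi_{j-1}}} = \norm*{\ket{\phi_{j-1}}}$ this gives $\norm*{\ket{\phi_j}}^2 = q_j\,\norm*{\ket{\phi_{j-1}}}^2$. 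Multiplying over $j = 1,\dots,k$ yields $\norm*{\ket{\phi_k}}^2 = \paren*{\prod_{j=1}^k q_j}\norm*{\ket{\phi_0}}^2 = \paren*{\prod_{j=1}^k q_j}M$. Since the measurement outcomes are the only way the algorithm can stop, the probability of surviving all $k$ measurements is exactly $\prod_{j=1}^k q_j$, i.e.\ $\prod_{j=1}^k q_j = 1 - p_k$, and therefore $M' = \norm*{\ket{\phi_k}}^2 = (1-p_k)M$. (Equivalently, one may run this as an induction on $k$ from $1 - p_{k+1} = q_{k+1}(1-p_k)$ and $M'_{k+1} = q_{k+1}M'_k$.)

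The norm accounting above is routine; the only point genuinely needing care is the implicit well-formedness of the execution. One must verify that each $\inc$ is actually realizable as a unitary, which requires the scratch counter $t$ never to exceed $M$ — this is where the size $M = C\accuracy^3 m$ and the hypothesis $\sum_{e\in E}(f_\alpha(e)+f_\beta(e)) \le 2\accuracy m$ of Lemma~\ref{lm:dicut-sketch} enter, together bounding the total number of scratch states consumed. One should also record the easy facts that the vectors listed within any single $\measure$/$\cleanup$ call are mutually orthogonal (by the choice of the $r_{i,j}$, $s_{i,j}$), so the call is a legitimate projective measurement, and that ``terminate within the first $k$ measurements'' is precisely the negation of ``every one of the first $k$ measurements returned its complementary projector'', so that no other event feeds into $p_k$.
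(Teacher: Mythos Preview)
Your proof is correct and takes a cleaner route than the paper's. The paper proves the lemma by explicit induction on $k$, splitting each $\measure$ and $\cleanup$ into its individual rank-one components and then case-splitting on whether both, one, or neither of the two basis states appearing in a $\measure$ projector (or the single state in a $\cleanup$ projector) are present in the superposition; in each case it computes the termination probability and the drop in $M'$ by hand and checks they match. You instead observe the general fact that for any projective measurement with complementary projector $P$, the Born rule gives $\norm{P\ket{\psi}}^2 = (\text{survival probability})\cdot\norm{\ket{\psi}}^2$, so tracking the squared norm of the unnormalized numerator telescopes directly to $(1-p_k)M$ with no case analysis at all. Your approach is shorter and makes clear that nothing about the specific form of the $\ket{\nu_{i,j,b}^a}$ or the cleanup vectors is used here; the paper's case analysis, by contrast, also implicitly re-verifies that the post-measurement numerator stays a uniform $\pm 1$ superposition of basis states, which is what makes the description ``$M'$ counts basis states'' literally true --- but that fact is not needed for the lemma as stated, since $M'$ is just the normalizing constant. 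Your third paragraph on well-formedness (scratch counter bounded, orthogonality of the $\ket{\nu}$'s) is a useful addition the paper leaves implicit.
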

\begin{proof}
We will prove a slightly stronger version of the result, in which we treat
$\measure$ as $4\accuracy^2$ sequential measurements on \[
\ket{\nu_{i,j,0}^a}\bra{\nu_{i,j,0}^a}, \ket{\nu_{i,j,1}^a}\bra{\nu_{i,j,1}^a},
\mathbbm{1} - \ket{\nu_{i,j,0}^a}\bra{\nu_{i,j,0}^a} -
\ket{\nu_{i,j,1}^a}\bra{\nu_{i,j,1}^a}
\]
for $a \in \brac{4}$, $(i,j) \in \brac{\accuracy}^2$, and $\cleanup$ as $16M\accuracy^2$
sequential measurements on \[
P, \mathbbm{1} - P
\]
for each of the projectors $P$ used in $\cleanup$. The lemma result is then just a
special case in which $k$ is restricted to those values corresponding to a
discrete set of $\measure$ and $\cleanup$ measurements.

We proceed by induction on $k$. For $k=0$, $p_k = 0$ and $M' = M$. So suppose
that after $k$ measurements, $M' = (1 - p_k)M$. Then either the next
measurement is from a $\measure$ or $\cleanup$ operation. Suppose it is from
a $\measure$ operation. Then \[
\ket{\nu_{i,j,b}^a} = \frac{\ket{x} + (-1)^b \ket{y}}{\sqrt{2}}
\]
for two states $\ket{x}$, $\ket{y}$ that may or may not be in the $M'$-element
superposition held by the algorithm. Call this superposition $\ket{\phi}$. Then \[
\abs{\braket{\phi}{\nu_{i,j,b}^a}}^2 = \begin{cases}
\frac{(1 + (-1)^b)^2}{2M'} = \frac{(1 + (-1)^b)^2}{2(1 - p_k)M} & \mbox{if both
states are in the superposition}\\
\frac{1}{2M'} = \frac{1}{2(1 - p_k)M} & \mbox{if one of them is}\\
0 & \mbox{otherwise.}
\end{cases}
\]
So in the first of these cases, the probability that one of
$\ket{\nu_{i,j,b}^a}_{b = 1, 2}$ is returned, terminating the algorithm, is
$2/M'$. So $1 - p_{k+1} = (1 - 2/(1 - p_k)M)(1 - p_k) = 1 - p_k - 2/M$. If the
algorithm does not terminate, both states are removed from the superposition
and $M'$ becomes $(1-p_k)M - 2 = (1 - p_{k+1})M$.

In the second, the probability of termination is $2 \cdot 1/2(1 - p_k)M = 1/(1 -
p_k)M$, and so $1 - p_{k+1} = (1 - 1/(1 - p_k)M)(1 - p_k) = 1 - p_k - 1/M$. If
the algorithm does not terminate, one state is removed from the superposition
and $M'$ becomes $(1 - p_k)M - 1 = (1 - p_{k+1})M$.

In the third, $p_{k+1} = p_k$ and the superposition is unchanged, so $M'$
remains $(1 - p_k)M = (1 - p_{k+1})M$.

Now suppose the measurement is from a $\cleanup$ operation. Then the
measurement uses a projector onto a state $\ket{x}$ that may or may not be in
the superposition. If it is, the probability of termination is $1/(1 - p_k)M$
and $M'$ becomes $(1 - p_k)M - 1$ if the algorithm does not terminate, so $M' =
(1 - p_{k+1})M$. If it is not, $p_{k+1} = p_k$ and $M'$ is unchanged, so $M'$
remains $(1-p_{k+1})M$.
\end{proof}
\begin{lemma}
\label{lm:rightdegreecontribution}
For every $\dedge{uv} \in E$ such that $\db{\dedge{uv}}_u \in
\interval{d_{\alpha}, d_{\alpha+1}}$, $\db{\dedge{uv}}_v \in
\interval{d_{\beta}, d_{\beta+1}}$, and for every $(i,j) \in
\brac{\accuracy}^2$, the total expected contribution of $\measure(u,v)$ to the
pseudosnapshot estimate from returning one of
$\ket{\nu_{i,j,b}^k}_{k\in\brac{4},b \in \bool}$ is $1$ to the
$\bps{\dedge{uv}}_u$, $\bps{\dedge{uv}}_v$ entry and zero to all other entries
if $i = \frac{\accuracy}{2d_{\alpha}} \doutbps{\dedge{uv}}_u + 1$ and $j =
\frac{\accuracy}{2d_{\beta}} \doutbps{\dedge{uv}}_v + 1$.  Otherwise it is zero
everywhere.
\end{lemma}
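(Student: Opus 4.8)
The plan is to split the claim into three parts: an exact combinatorial description, via the state invariant, of which basis states occur in the superposition at the instant $\measure(u,v)$ is performed while processing $\dedge{uv}$; the resulting measurement-outcome probabilities; and the bookkeeping of the value the classical stage writes, summed over the eight outcomes $\ket{\nu^k_{i,j,b}}$ ($k\in\brac{4}$, $b\in\bool$) of the fixed $(i,j)$-block.

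First I would fix $\dedge{uv}$, condition on the algorithm reaching the $\measure(u,v)$ call (say with probability $q$), and examine the state $\ket\phi$ right after the $\inc$ operations of Steps~1--3 but before the measurement. Each single-vertex basis state occurring in the $\ket{\nu^k_{i,j,b}}$ --- namely $\ket{u(d_\alpha+(i-1)d_{\alpha+1})a^\bullet}$, $\ket{u(id_{\alpha+1})b^\bullet}$, $\ket{v(d_\beta+(j-1)d_{\beta+1})c^\bullet}$, $\ket{v(jd_{\beta+1})d^\bullet}$ --- has amplitude $1/\sqrt{M'}$ in $\ket\phi$ exactly when the displayed integer lies in $A_u$, $B_u$, $C_v$, or $D_v$, respectively. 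These memberships I would read off from Lemma~\ref{lm:stinv} for the configuration just before $\dedge{uv}$, pushed through the Step~1--3 increments (composing the at most two $\inc$'s hitting each counter, exactly as in the inductive step of that lemma's proof); the $(i,j)$-block of $\measure$ and $\cleanup(u,v)$ touch copies disjoint from this block, so they do not disturb the relevant states. Writing $R_u:=\tfrac{\accuracy}{2d_\alpha}\doutbps{\dedge{uv}}_u$ and $R_v:=\tfrac{\accuracy}{2d_\beta}\doutbps{\dedge{uv}}_v$ (the numbers of $f_\alpha$- resp.\ $f_\beta$-sampled out-edges of $u$, $v$ among the edges through $\dedge{uv}$), and using the hypotheses $\db{\dedge{uv}}_u\in\interval{d_\alpha,d_{\alpha+1}}$, $\db{\dedge{uv}}_v\in\interval{d_\beta,d_{\beta+1}}$ --- so the $A_u$ and $C_v$ counters have saturated at $d_\alpha$, $d_\beta$ while the $B_u$, $D_v$ counters have not reached $d_{\alpha+1}$, $d_{\beta+1}$ --- one obtains equivalences of the shape $d_\alpha+(i-1)d_{\alpha+1}\in A_u\iff i\le R_u+1$ and $id_{\alpha+1}\in B_u\iff i\le R_u$, together with the symmetric statements for $C_v$, $D_v$ with $j$, $R_v$ in place of $i$, $R_u$.

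Next, write $\ket{\nu^k_{i,j,b}}=(\ket{x_k}+(-1)^b\ket{y_k})/\sqrt 2$ with $\ket{x_k}$ the $u$-register state (drawn from $A_u$ for $k\in\{1,3\}$, from $B_u$ for $k\in\{2,4\}$) and $\ket{y_k}$ the $v$-register state (from $C_v$ for $k\in\{1,2\}$, from $D_v$ for $k\in\{3,4\}$); since the eight states $x_k,y_k$ are pairwise orthogonal, $|\braket\phi{\nu^k_{i,j,b}}|^2=\tfrac{1}{2M'}(\mathbbm 1[x_k\text{ present}]+(-1)^b\mathbbm 1[y_k\text{ present}])^2$, and by Lemma~\ref{lm:earlyterm} --- which gives $M'=qM$ conditioned on reaching the measurement --- the unconditional probability of outcome $\ket{\nu^k_{i,j,b}}$ is $\tfrac{1}{2M}(\mathbbm 1[x_k]+(-1)^b\mathbbm 1[y_k])^2$, the factor $q$ cancelling the $1/M'$. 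When $\ket{\nu^k_{i,j,b}}$ is returned the classical stage writes $\sigma_k(-1)^bM/2$ --- with $\sigma_k=+1$ for $k\in\{1,4\}$ and $\sigma_k=-1$ for $k\in\{2,3\}$ --- into the entry indexed by the pseudobias classes it computes from $i,j$ alone, so all eight outcomes of the $(i,j)$-block target a single entry. Summing $\tfrac{1}{2M}(\mathbbm 1[x_k]+(-1)^b\mathbbm 1[y_k])^2\cdot\sigma_k(-1)^bM/2$ over $b$ collapses to $\sigma_k\,\mathbbm 1[x_k\text{ and }y_k\text{ present}]$, and summing over $k$ and grouping by source counter yields
\[
\big(\mathbbm 1[d_\alpha+(i-1)d_{\alpha+1}\in A_u]-\mathbbm 1[id_{\alpha+1}\in B_u]\big)\big(\mathbbm 1[d_\beta+(j-1)d_{\beta+1}\in C_v]-\mathbbm 1[jd_{\beta+1}\in D_v]\big),
\]
which by the previous paragraph telescopes to $\mathbbm 1[i=R_u+1]\,\mathbbm 1[j=R_v+1]=\mathbbm 1[i=\tfrac{\accuracy}{2d_\alpha}\doutbps{\dedge{uv}}_u+1]\,\mathbbm 1[j=\tfrac{\accuracy}{2d_\beta}\doutbps{\dedge{uv}}_v+1]$. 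It remains to note that for such $\dedge{uv}$ the classical stage's degree assumptions are consistent with the true $\dbps{\dedge{uv}}_u=d_\alpha$, $\dbps{\dedge{uv}}_v=d_\beta$, and that it has the exact values of $\douta{\dedge{uv}}_u,\da{\dedge{uv}}_u,\douta{\dedge{uv}}_v,\da{\dedge{uv}}_v$ and of $g$, so that when $i=R_u+1$ and $j=R_v+1$ it computes precisely the classes of $\bps{\dedge{uv}}_u$ and $\bps{\dedge{uv}}_v$; this yields contribution $1$ to that entry and $0$ elsewhere, and $0$ everywhere when $(i,j)$ fails the condition.

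I expect the crux to be the first step: carrying the $\prefix_i$-parametrized interval description of $A_u,B_u,C_v,D_v$ from Lemma~\ref{lm:stinv} through the current edge's increments and verifying the membership equivalences, in particular the telescoping $\mathbbm 1[d_\alpha+(i-1)d_{\alpha+1}\in A_u]-\mathbbm 1[id_{\alpha+1}\in B_u]=\mathbbm 1[i=R_u+1]$, which is exactly where the degree hypothesis enters and is delicate at the ends of the interval $\interval{d_\alpha,d_{\alpha+1}}$.
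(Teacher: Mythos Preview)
Your proposal is correct and follows essentially the same approach as the paper's proof: invoke the state invariant (Lemma~\ref{lm:stinv}) pushed through the current edge's increments to determine which of the eight basis states in the $(i,j)$-block are present, compute the outcome probabilities, sum the signed contributions over $b$ and $k$, and use Lemma~\ref{lm:earlyterm} to cancel the conditioning against $M'/M$. Your factorization of the $k$-sum as $(\mathbbm 1[A_u]-\mathbbm 1[B_u])(\mathbbm 1[C_v]-\mathbbm 1[D_v])$ is a tidier packaging of what the paper does case-by-case via its $x_{i,j}^a$, but the content is identical.
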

\begin{proof}
We will start by analyzing the expectation conditional on the algorithm not
terminating before $\dedge{uv}$ arrives. This will give us the result in terms
of $M'$. We will then use Lemma~\ref{lm:earlyterm} to give us the expectation
in terms of $M$.

At the time when $\dedge{uv}$ arrives, let $R_u'$, $R_v'$, be the $R$ in
Lemma~\ref{lm:stinv} for $u,v$ respectively, and likewise for $r_u'$, $r_v'$
and $r$. Then, for $w = u,v$, set $r_w = r_w' + 1$. Set $R_u = R_u' +
f_\alpha(\dedge{uv}) + f_\beta(\dedge{uv})$. Note that this means that, before
measuring, the algorithm will update the state with an $\inc(\Ec, w, 1)$
operation for $w = u,v$ and $\Ec = \Ac, \Bc, \Cc, \Dc$, 
an $\inc(\Ec,u,d_{\alpha+1}\cdot f_{\alpha}(\dedge{uv}))$ operation for $\Ec = \Ac, \Bc$
and an
$\inc(\Ec,u,d_{\beta+1}\cdot f_{\beta}(\dedge{uv}))$ operation for $\Ec = \Cc, \Dc$.

Note that $R_u = \frac{\accuracy}{2d_{\alpha}}\doutbps{\dedge{uv}}_u$, $R_v =
\frac{\accuracy}{2d_{\beta}} \doutbps{\dedge{uv}}_v$, $r_u = \db{\dedge{uv}}_u$, $r_v =
\db{\dedge{uv}}_v$. So we have $r_u \in \interval{d_\alpha,d_{\alpha+1}}$, $r_v
\in \interval{d_\beta,d_{\beta+1}}$.

By Lemma~\ref{lm:stinv}, this means that for all $i$, before the increments,
\begin{align*}
d_\alpha + id_{\alpha+1} - 1 \in A_u, id_{\alpha+1} - 1 \in B_u &\mbox{ iff $i
\in \brac{R_u'}$}\\
d_\beta + id_{\beta+1} - 1 \in C_v, id_{\beta+1} - 1 \in D_v &\mbox{ iff $i \in
\brac{R_v'}$}
\end{align*}
and after them:
\begin{align*}
d_\alpha + id_{\alpha+1} \in A_u, id_{\alpha+1} \in B_u &\mbox{ iff $i
\in \brac{R_u}$}\\
d_\beta + id_{\beta+1} \in C_v, id_{\beta+1} \in D_v &\mbox{ iff $i \in
\brac{R_v}$}
\end{align*}
Now, writing $\ket{\phi}$ for the state of the algorithm before the
measurements, and recalling that
\begin{align*}
\ket{\nu_{i,j,b}^1} &= \frac{\ket{u(d_\alpha + (i-1)d_{\alpha+1})a^{r_{i,j}}} + (-1)^b
\ket{v(d_\beta + (j-1)d_{\beta+1})c^{r_{i,j}}}}{\sqrt{2}}\\
\ket{\nu_{i,j,b}^2} &= \frac{\ket{u(id_{\alpha+1})b^{r_{i,j}}} + (-1)^b \ket{v(d_\beta
+ (j-1)d_{\beta+1})c^{s_{i,j}}}}{\sqrt{2}}\\
\ket{\nu_{i,j,b}^3} &= \frac{\ket{u(d_\alpha + (i-1)d_{\alpha+1})a^{s_{i,j}}} + (-1)^b
\ket{v(jd_{\beta+1})d^{r_{i,j}}}}{\sqrt{2}}\\
\ket{\nu_{i,j,b}^4} &= \frac{\ket{u(id_{\alpha+1})b^{s_{i,j}}} + (-1)^b
\ket{v(jd_{\beta+1})d^{s_{i,j}}}}{\sqrt{2}}
\end{align*}
we have
\begin{align*}
\abs{\braket{\phi}{\nu_{i,j,b}^1}}^2 &= \begin{cases}
\frac{\paren*{1 + (-1)^b)}^2}{2M'}  & \mbox{$i \in \brac{R_u + 1}$ and $j \in \brac{R_v + 1}$}\\
\frac{1}{2M'} & \mbox{exactly one of $i \in \brac{R_u + 1}$ and $j \in \brac{R_v + 1}$}\\
0 & \mbox{otherwise}
\end{cases}\\
\abs{\braket{\phi}{\nu_{i,j,b}^2}}^2 &= \begin{cases}
\frac{\paren*{1 + (-1)^b)}^2}{2M'}  & \mbox{$i \in \brac{R_u}$ and $j \in \brac{R_v + 1}$}\\
\frac{1}{2M'} & \mbox{exactly one of $i \in \brac{R_u}$ and $j \in \brac{R_v + 1}$}\\
0 & \mbox{otherwise}
\end{cases}\\
\abs{\braket{\phi}{\nu_{i,j,b}^3}}^2 &= \begin{cases}
\frac{\paren*{1 + (-1)^b)}^2}{2M'}  & \mbox{$i \in \brac{R_u + 1}$ and $j \in \brac{R_v}$}\\
\frac{1}{2M'} & \mbox{exactly one of $i \in \brac{R_u + 1}$ and $j \in \brac{R_v}$}\\
0 & \mbox{otherwise}
\end{cases}\\
\abs{\braket{\phi}{\nu_{i,j,b}^4}}^2 &= \begin{cases}
\frac{\paren*{1 + (-1)^b)}^2}{2M'}  & \mbox{$i \in \brac{R_u}$ and $j \in \brac{R_v}$}\\
\frac{1}{2M'} & \mbox{exactly one of $i \in \brac{R_u}$ and $j \in \brac{R_v}$}\\
0 & \mbox{otherwise.}
\end{cases}
\end{align*}
Now, recall that the contribution to the chosen entry of the pseudosnapshot
(with the choice depending only on $i,j$) when seeing the result
$\ket{\nu_{i,j,b}^a}$  is $(-1)^bM/2$ if $a = 1,4$ and $(-1)^{\overline{b}}M/2$
if $a = 2,3$. Therefore, writing the total expected contribution from
$\ket{\nu_{i,j,b}^a}$ summed over $b = 0,1$ as $x_{i,j}^a$, we have
\begin{align*}
x_{i,j}^1 &= \begin{cases}
\frac{M}{M'} & \mbox{$i \in \brac{R_u + 1}$ and $j \in \brac{R_v + 1}$}\\
0 & \mbox{otherwise}
\end{cases}\\
x_{i,j}^2 &= \begin{cases}
-\frac{M}{M'} & \mbox{$i \in \brac{R_u}$ and $j \in \brac{R_v + 1}$}\\
0 & \mbox{otherwise}
\end{cases}\\
x_{i,j}^3 &= \begin{cases}
-\frac{M}{M'} & \mbox{$i \in \brac{R_u + 1}$ and $j \in \brac{R_v}$}\\
0 & \mbox{otherwise}
\end{cases}\\
x_{i,j}^4&= \begin{cases}
\frac{M}{M'} & \mbox{$i \in \brac{R_u}$ and $j \in \brac{R_v}$}\\
0 & \mbox{otherwise}
\end{cases}
\end{align*}
and so \[
\sum_{a=1}^4x_{i,j}^a = \begin{cases}
\frac{M}{M'} & \mbox{$i = R_u + 1$ and $j = R_v + 1$}\\
0 & \mbox{otherwise.}
\end{cases}
\]
Now, when $i = R_u + 1$ and $j = R_v + 1$, and $\db{\dedge{uv}}_u \in
\interval{d_{\alpha}, d_{\alpha+1}}$, $\db{\dedge{uv}}_v \in
\interval{d_{\beta}, d_{\beta+1}}$, the entry of the pseudosnapshot estimate
chosen is $\bps{\dedge{uv}}_u, \bps{\dedge{uv}}_v$. We have that the
contribution to it conditioned on the algorithm not terminating before
processing $\dedge{uv}$ is $M/M'$. So, as the contribution from $\dedge{uv}$ is
guaranteed to be $0$ if the algorithm has already terminated, the total
expectation is, by Lemma~\ref{lm:earlyterm}, \[ 
(1 - p) \cdot \frac{M}{M'} = (1 - p) \cdot \frac{M}{(1-p)M} = 1
\]
where $p$ is the probability of termination before processing $\dedge{uv}$.
\end{proof}

\begin{lemma}
\label{lm:wrongdegreecontribution}
For every $\dedge{uv}$ such that $\db{\dedge{uv}}_u \not\in
\interval{d_{\alpha}, d_{\alpha+1}}$, or $\db{\dedge{uv}}_v \not\in
\interval{d_{\beta}, d_{\beta+1}}$, the total expected contribution of
$\measure(u,v)$ to any entry of the pseudosnapshot estimate is zero.
\end{lemma}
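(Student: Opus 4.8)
The plan is to reduce matters to the computation already carried out in the proof of Lemma~\ref{lm:rightdegreecontribution} and then show that the quantity it produces vanishes. As there, I would first condition on the algorithm not having terminated before $\dedge{uv}$ is processed; by Lemma~\ref{lm:earlyterm} the superposition $\ket{\phi}$ held just before $\measure(u,v)$ is then deterministic, with $M' = (1-p)M$ where $p$ is the probability of earlier termination, so it suffices to show that the expected contribution of $\measure(u,v)$ to every entry, conditional on this event, is $0$ --- the unconditional expectation is then $(1-p)\cdot 0 = 0$. For $w = u,v$ put $S^A_w = \set{i : d_\alpha + (i-1)d_{\alpha+1} \in A_w}$ and $S^B_w = \set{i : id_{\alpha+1} \in B_w}$, evaluated on $\ket{\phi}$ (i.e.\ after the $\inc$ operations triggered by $\dedge{uv}$ and before $\measure(u,v)$), and define $S^C_v, S^D_v$ analogously with $\beta$ in place of $\alpha$ and $C,D$ in place of $A,B$. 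These sets are exactly what control which measurement vectors overlap $\ket{\phi}$.

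The computation of $\abs{\braket{\phi}{\nu^k_{i,j,b}}}^2$ in the proof of Lemma~\ref{lm:rightdegreecontribution} depends only on which basis states occur in $\ket{\phi}$, so it applies verbatim here; writing $x^k_{i,j}$ for the expected contribution of returning $\ket{\nu^k_{i,j,b}}$ summed over $b \in \bool$, it yields
\[
\sum_{k=1}^{4} x^k_{i,j} = \frac{M}{M'}\paren*{\mathbbm{1}\brac*{i \in S^A_u} - \mathbbm{1}\brac*{i \in S^B_u}}\paren*{\mathbbm{1}\brac*{j \in S^C_v} - \mathbbm{1}\brac*{j \in S^D_v}}\text{.}
\]
Since the entry selected by $\measure(u,v)$ when it returns $\ket{\nu^k_{i,j,b}}$ is, given the (fixed) classical data tracked after $\dedge{uv}$, a function of $i$ alone times a function of $j$ alone, the conditional expected contribution to an entry $(p,q)$ is $\frac{M}{M'}\paren*{\sum_{i \in I_p}\paren*{\mathbbm{1}\brac*{i\in S^A_u} - \mathbbm{1}\brac*{i\in S^B_u}}}\paren*{\sum_{j \in J_q}\paren*{\mathbbm{1}\brac*{j\in S^C_v} - \mathbbm{1}\brac*{j\in S^D_v}}}$ for appropriate index sets $I_p, J_q$. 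Hence this is $0$ for every entry provided $S^A_u = S^B_u$ or $S^C_v = S^D_v$, and the lemma reduces to the claim that $S^A_u = S^B_u$ whenever $\db{\dedge{uv}}_u \notin \interval{d_\alpha, d_{\alpha+1}}$, and symmetrically $S^C_v = S^D_v$ whenever $\db{\dedge{uv}}_v \notin \interval{d_\beta, d_{\beta+1}}$.

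To prove the claim I would, as in the proof of Lemma~\ref{lm:rightdegreecontribution}, apply Lemma~\ref{lm:stinv} to the state after $\dedge{uv}$'s predecessor was fully processed and push the $\inc$ operations of $\dedge{uv}$ through it, obtaining $A_u$ and $B_u$ on $\ket{\phi}$ as finite unions of half-open intervals. With $r_u = \db{\dedge{uv}}_u$ and $R_u$ the number of $f_\alpha$-sampled out-edges of $u$ among $\dedge{uv}$ and the preceding edges, the key observation is that the description of $A_u$ involves $d_\alpha$ only through the quantity $\min(r_u+1,d_\alpha)$ (capping the reach of the base interval if $R_u=0$, or of the topmost sampled interval if $R_u>0$), while the description of $B_u$ is identical except that $\min(r_u+1,d_\alpha)$ is replaced throughout by $\min(r_u+1,d_{\alpha+1})$ and, when $R_u>0$, the base interval ends at $d_{\alpha+1}$ rather than $d_\alpha$. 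A direct check --- using $\rho_i \ge 1$, so that each potential member $d_\alpha + (i-1)d_{\alpha+1}$ of $S^A_u$ (resp.\ $id_{\alpha+1}$ of $S^B_u$) either coincides with an excluded right endpoint of one of these intervals, lies in a gap between two consecutive ones, or exceeds them all --- shows that $S^A_u$ and $S^B_u$ are determined by the same threshold comparisons apart from this one substitution. When $r_u < d_\alpha$, both $\min(r_u+1,d_\alpha)$ and $\min(r_u+1,d_{\alpha+1})$ equal $r_u+1$, so $S^A_u = S^B_u$; when $r_u \ge d_{\alpha+1}$, they equal their ``full'' values $d_\alpha$ and $d_{\alpha+1}$, and again $S^A_u = S^B_u$; it is only in the intermediate regime $d_\alpha \le r_u < d_{\alpha+1}$ that the substitution changes anything, yielding there the one-index discrepancy $S^A_u = \brac{R_u+1} \neq \brac{R_u} = S^B_u$ exploited in Lemma~\ref{lm:rightdegreecontribution}. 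The identical argument with $\beta, C, D$ for $\alpha, A, B$ gives $S^C_v = S^D_v$ when $v$ has the wrong degree, completing the proof.

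I expect the main obstacle to be this last case analysis: extracting from Lemma~\ref{lm:stinv} and the push-through of $\dedge{uv}$'s increments the precise half-open interval decompositions of $A_u$ and $B_u$ on $\ket{\phi}$ and verifying the membership/gap bookkeeping in each of the cases $R_u = 0$ versus $R_u > 0$ and $r_u < d_\alpha$ versus $r_u \ge d_{\alpha+1}$, being careful both with the half-openness of the intervals and with the possibility $d_\alpha = d_{\alpha+1}$ (allowed, since the $d_i$ need not be strictly increasing). It is conceptually routine but somewhat lengthy.
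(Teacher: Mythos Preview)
Your proposal is correct and follows essentially the same route as the paper: reduce to showing that $d_\alpha + (i-1)d_{\alpha+1} \in A_u \Leftrightarrow i d_{\alpha+1} \in B_u$ for all $i$ (your $S^A_u = S^B_u$), or the $C_v, D_v$ analogue, in the state $\ket{\phi}$ just before $\measure(u,v)$, and derive this from Lemma~\ref{lm:stinv} together with the push-through of $\dedge{uv}$'s $\inc$ operations. The paper is simply terser on two points. First, instead of computing the factored formula $\sum_k x^k_{i,j} = \frac{M}{M'}(\mathbbm{1}\brac{i\in S^A_u}-\mathbbm{1}\brac{i\in S^B_u})(\mathbbm{1}\brac{j\in S^C_v}-\mathbbm{1}\brac{j\in S^D_v})$ and then invoking separability of the entry in $(i,j)$, it observes directly that the equivalence forces $\abs{\braket{\phi}{\nu^1_{i,j,b}}} = \abs{\braket{\phi}{\nu^2_{i,j,b}}}$ and $\abs{\braket{\phi}{\nu^3_{i,j,b}}} = \abs{\braket{\phi}{\nu^4_{i,j,b}}}$ (or the $1\leftrightarrow 3$, $2\leftrightarrow 4$ pairing), and since for fixed $i,j$ all four outcomes contribute to the \emph{same} entry with signs $(-1)^b$ for $a=1,4$ and $(-1)^{\overline b}$ for $a=2,3$, they cancel --- so your separability step, while correct, is not needed. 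Second, it asserts the $A_u/B_u$ equivalence straight from Lemma~\ref{lm:stinv} without writing out the two cases $r_u < d_\alpha$ and $r_u \ge d_{\alpha+1}$; your planned case analysis makes explicit what the paper leaves implicit.
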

\begin{proof}
We will analyze the expectation conditional on the algorithm not
terminating before $\dedge{uv}$ arrives. As the expected contribution is
trivially $0$ conditional on the algorithm terminating before $\dedge{uv}$
arrives, this will suffice for the result. 

We have that at least one of $\db{\dedge{uv}}_w < d_\gamma$ for some
$(w,\gamma) \in \set{(u,\alpha), (v,\beta)}$ or $\db{\dedge{uv}}_w \ge
d_{\gamma + 1}$ for some $(w,\gamma) \in \set{(u,\alpha), (v,\beta)}$. By
Lemma~\ref{lm:stinv}, this implies that, for one of $(E,F) \in \set{(A,B),
(C,D)}$, \[
d_\gamma + (i - 1)d_{\gamma+1} \in E_w \Leftrightarrow id_{\gamma+1} \in F_w
\]
for all $i \in \brac{M}$. So, writing $\ket{\phi}$ for the state
of the algorithm, we have either \[
\abs{\braket{\phi}{\nu_{i,j,b}^1}} = \abs{\braket{\phi}{\nu_{i,j,b}^2}},
\abs{\braket{\phi}{\nu_{i,j,b}^3}} = \abs{\braket{\phi}{\nu_{i,j,b}^4}}
\]
for all $i,j,b$, or \[
\abs{\braket{\phi}{\nu_{i,j,b}^1}} = \abs{\braket{\phi}{\nu_{i,j,b}^3}},
\abs{\braket{\phi}{\nu_{i,j,b}^2}} = \abs{\braket{\phi}{\nu_{i,j,b}^4}}
\]
for all $i,j,b$. As for all $i,j,b$ the contribution from the measurement
result $\ket{\nu_{i,j,b}^a}$ is made to the same entry of the estimate
regardless of $a$, and is $(-1)^bM/2$ for $a = 1,4$ and $(-1)^{\overline{b}}M/2$
for $a = 2,3$, this implies the expected contributions cancel out, and so the
lemma follows.
\end{proof}

\begin{lemma}
\label{lm:bias}
Each entry of the estimate has bias at most the number of edges
$\dedge{uv}$ such that:
\begin{enumerate}
\item $\db{\dedge{uv}}_u \in \interval{d_\alpha,
d_{\alpha+1}}$
\item $\db{\dedge{uv}}_v \in \interval{d_\beta, d_{\beta+1}}$
\item $\max\set*{\frac{\accuracy}{2d_{\alpha}} \doutbps{\dedge{uv}}_u + 1,
\frac{\accuracy}{2d_{\beta}} \doutbps{\dedge{uv}}_v + 1} > \accuracy$
\end{enumerate}
\end{lemma}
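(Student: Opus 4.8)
The plan is to compute the expectation of each entry of the estimate exactly, via linearity of expectation over the edges, and then compare it with the quantity being estimated. Throughout I work under the hypothesis of Lemma~\ref{lm:dicut-sketch}, so that $t$ never exceeds $M$ and the state invariant of Lemma~\ref{lm:stinv}---hence Lemmas~\ref{lm:rightdegreecontribution} and~\ref{lm:wrongdegreecontribution}---applies. Fix an entry $(k,l)$ and let $E' \subseteq E$ be the set of edges $\dedge{uv}$ with $\db{\dedge{uv}}_u \in \interval{d_\alpha, d_{\alpha+1}}$ and $\db{\dedge{uv}}_v \in \interval{d_\beta, d_{\beta+1}}$, so the quantity we want to estimate is $\histps{G, E'}_{k,l}$. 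The estimate output by the algorithm is a sum of per-edge contributions: for each $\dedge{uv}$ processed in the quantum stage, either $\measure(u,v)$ returns some $\ket{\nu_{i,j,b}^a}$ and the classical stage later adds $\pm M/2$ to the single entry picked out by $(i,j)$, or $\cleanup(u,v)$ halts the algorithm with an all-zeroes output, or neither fires and the next edge is processed. By linearity, the expected value of entry $(k,l)$ of the estimate therefore equals the sum over $\dedge{uv} \in E$ of the expected contribution of $\measure(u,v)$ to entry $(k,l)$, plus the analogous sum for $\cleanup(u,v)$; the latter vanishes by the lemma stating that $\cleanup$ contributes nothing in expectation.

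Next I would substitute the two contribution lemmas. By Lemma~\ref{lm:wrongdegreecontribution}, any edge $\dedge{uv} \notin E'$ contributes $0$ in expectation to every entry. By Lemma~\ref{lm:rightdegreecontribution}, for an edge $\dedge{uv} \in E'$ the expected contribution of $\measure(u,v)$, summed over all $(i,j) \in \brac{\accuracy}^2$, $a \in \brac{4}$, and $b \in \bool$, is exactly $1$ to the entry indexed by the pseudobias classes of $\bps{\dedge{uv}}_u$ and $\bps{\dedge{uv}}_v$ whenever $\frac{\accuracy}{2d_\alpha}\doutbps{\dedge{uv}}_u + 1$ and $\frac{\accuracy}{2d_\beta}\doutbps{\dedge{uv}}_v + 1$ both lie in $\brac{\accuracy}$, and $0$ to every entry otherwise. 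Since each of these two quantities is at least $1$, the event ``both lie in $\brac{\accuracy}$'' is exactly the negation of condition~3.

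Finally I would compare with the truth. By definition, $\histps{G,E'}_{k,l}$ counts the edges $\dedge{uv} \in E'$ with $\bps{\dedge{uv}}_u$ in bias interval $k$ and $\bps{\dedge{uv}}_v$ in bias interval $l$. Combining with the previous step, $\histps{G,E'}_{k,l}$ minus the expected value of entry $(k,l)$ of the estimate equals the number of edges $\dedge{uv} \in E'$ that both land in entry $(k,l)$ and satisfy condition~3. This is a nonnegative integer, and dropping the requirement of landing in entry $(k,l)$ only increases it, so it is at most the number of edges satisfying conditions~1--3. Hence the bias of each entry (which is nonnegative) is at most that count, which is the claim.

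I expect the only real obstacle to be bookkeeping rather than any new idea: one must keep straight which pairs $(i,j)$ the measurement $\measure$ actually tests---namely those in $\brac{\accuracy}^2$---in order to identify ``condition~3 fails'' with ``the edge is counted exactly once in expectation'', and one must observe that the estimator can only undercount, since it silently drops edges whose sampled out-degree is too large and never manufactures spurious edges; this is what pins down the sign of the bias. All the quantitative content---the $M/M'$ cancellation, the orthogonality of the $\ket{\nu_{i,j,b}^a}$, and the cancellation of the $+$ and $-$ outcomes on wrong-degree edges---has already been carried out in Lemmas~\ref{lm:earlyterm}, \ref{lm:rightdegreecontribution}, and~\ref{lm:wrongdegreecontribution}.
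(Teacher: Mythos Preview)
Your proposal is correct and follows exactly the paper's approach: decompose the expected estimate edge-by-edge, kill wrong-degree edges with Lemma~\ref{lm:wrongdegreecontribution}, and use Lemma~\ref{lm:rightdegreecontribution} together with the fact that $(i,j)$ ranges only over $\brac{\accuracy}^2$ to see that right-degree edges failing condition~3 contribute exactly $1$ to the correct entry while the rest contribute $0$. Your write-up is in fact more careful than the paper's, in particular by making explicit the sign of the bias (the estimator only undercounts) and the role of the $\cleanup$ lemma.
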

\begin{proof}
By Lemma~\ref{lm:wrongdegreecontribution}, no edge edge with endpoints in the
wrong degree classes contribute to the estimate. By
Lemma~\ref{lm:rightdegreecontribution} every edge $\dedge{uv}$ with endpoints
in the right degree classes contributes $1$ to the correct entry of the
estimate of $\histps{G}$ (and 0 to all others), unless \[
\max\set*{\frac{\accuracy}{2d_{\alpha}} \doutbps{\dedge{uv}}_u + 1,
\frac{\accuracy}{2d_{\beta}} \doutbps{\dedge{uv}}_v + 1} > \accuracy
\]
in which case it contributes 0 to all entries, as the $i,j$ in
Lemma~\ref{lm:rightdegreecontribution} only range in $\brac{\accuracy}^2$.
\end{proof}

\begin{lemma}
The variance of any entry of the pseudosnapshot estimate is $\bO{\accuracy^6m^2}$.
\end{lemma}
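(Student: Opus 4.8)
The plan is to exploit the fact that every coordinate of the estimate is, by construction, supported on only three values. Recall from the description of the algorithm that a nonzero estimate is produced only when the quantum stage is terminated by a $\measure(u,v)$ returning some $\ket{\nu_{i,j,b}^r}$, in which case exactly one entry of the pseudosnapshot estimate is set --- to $(-1)^bM/2$ if $r\in\set{1,4}$ and to $(-1)^{\overline b}M/2$ if $r\in\set{2,3}$ --- and all other entries are $0$; in every other case (termination by $\cleanup$, or the quantum stage reaching the end of the stream without a terminating measurement) the entire estimate is the all-zeroes vector. Hence, writing $\wh{S}_{i,j}$ for the $(i,j)$ entry of the output estimate, we always have $\abs{\wh{S}_{i,j}}\le M/2$, deterministically and regardless of the draw of the hash functions.

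First I would recall that $M = C\accuracy^3 m$ for the constant $C$ fixed when the maintained state $\tfrac{1}{\sqrt{M'}}\paren*{\ket{\Sc}+\sum_{i=1}^{2\accuracy^2}(\ket{\Ac^i}+\ket{\Bc^i}+\ket{\Cc^i}+\ket{\Dc^i})}$ was introduced, so that $M/2 = \bO{\accuracy^3 m}$. Bounding the variance by the second moment then immediately gives, for every entry,
\[
\Var{\wh{S}_{i,j}}\;\le\;\E{\wh{S}_{i,j}^2}\;\le\;\paren*{M/2}^2\;=\;\bO{\accuracy^6 m^2}\text{,}
\]
which is exactly the claimed bound. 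Nothing in this argument depends on $i$, $j$, on the arrival order, or on the hash functions, so it holds uniformly over all entries.

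There is essentially no obstacle for the bound as stated; the only subtlety is that this crude argument does not recover the sharper $\bO{\accuracy^3 m^2}$ figure appearing in Lemma~\ref{lm:dicut-sketch}. To obtain that one would instead write $\E{\wh{S}_{i,j}^2} = (M/2)^2\,\prob{\wh{S}_{i,j}\ne 0}$ and bound $\prob{\wh{S}_{i,j}\ne 0}$ by $\bO{1/\accuracy^3}$, using Lemma~\ref{lm:earlyterm} to control the cumulative termination probabilities together with the facts that at each of the at most $m$ updates there are only $\bO{\accuracy^2}$ terminating $\measure$-outcomes, each seen with probability $\bO{1/M}$, and that only $\bO{1}$ of the $\accuracy^2$ index pairs $(i,j)$ feed any given coordinate of the estimate. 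Since the present lemma only asks for $\bO{\accuracy^6 m^2}$, I would stop at the display above and not carry out this refinement.
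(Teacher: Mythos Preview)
Your proof is correct and is essentially the same as the paper's: both observe that the output has at most one nonzero entry of magnitude $\bO{M}$, and then use $M=\bO{\accuracy^3 m}$ to bound the second moment by $\bO{\accuracy^6 m^2}$. Your additional remark about the discrepancy with the $\bO{\accuracy^3 m^2}$ variance stated in Lemma~\ref{lm:dicut-sketch} is a fair observation; the paper's own proof only establishes the $\bO{\accuracy^6 m^2}$ bound (and indeed the proof of Theorem~\ref{thm:mdcutalg} uses $\accuracy^6$), so the $\accuracy^3$ in the lemma statement appears to be a typo rather than a stronger claim actually proved.
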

\begin{proof}
The output of the algorithm is an estimate of the pseudosnapshot with one
$\bO{M}$ entry and all other entries $0$. So this follows by the fact that $M =
\bO{\accuracy^3 m}$.
\end{proof}

\subsubsection{Space Usage}
\begin{lemma}
The algorithm uses only $\bO{\log n}$ qubits of space.
\end{lemma}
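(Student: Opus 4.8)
The plan is to account for every qubit register the algorithm maintains and show each holds a value of magnitude $\poly(n)$, hence $\bO{\log n}$ bits. The state is a superposition over basis vectors of the form $\ket{vie}$ (plus scratch states $\ket{is}$), where $v$ ranges over $V$, the marker $e$ ranges over the constant-size set $\set{s, a^i, b^i, c^i, d^i : i \in \brac{2\accuracy^2}}$, and $i$ is a counter. So I would organize the proof as: (i) the vertex register needs $\ceil{\log n}$ qubits; (ii) the marker register needs $\bO{\log \accuracy} = \bO{\log n}$ qubits since $\accuracy = \poly(n)$; (iii) the counter register $i$ ranges over $\brac{M}$ with $M = C\accuracy^3 m = \poly(n)$, so it needs $\bO{\log n}$ qubits; (iv) the classical control data — the scratch pointer $t \le M$, the indices $\alpha, \beta \le \floor{\log_{1+\eps^3} n}$, the loop indices in $\measure$ and $\cleanup$ which range over $\brac{\accuracy}$ or $\brac{M}$ — are all $\poly(n)$ and so fit in $\bO{\log n}$ bits; and (v) once the quantum stage terminates, the classical stage only tracks $\douta{e}_u, \douta{e}_v, \da{e}_u, \da{e}_v$ (each at most $m$) together with the constantly-many returned indices, again $\bO{\log n}$ bits.

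The key point to verify carefully is step (iii): that the counter $i$ never exceeds $\poly(n)$, i.e.\ that no basis vector with an index larger than $M$ is ever created. Here I would invoke Lemma~\ref{lm:stinv}: the sets $A_v, B_v, C_v, D_v$ are always contained in $\brac{1, Rd_{\alpha+1} + \min(r+1, d_{\alpha+1})} \subseteq \brac{1, Rd_{\alpha+1} + d_{\alpha+1}}$ where $R$ is bounded by the number of hash-marked out-edges of $v$ and $d_{\alpha+1} \le n$. Since the hypothesis of Lemma~\ref{lm:dicut-sketch} assumes $\sum_e (f_\alpha(e) + f_\beta(e)) \le 2\accuracy m$, the total number of $\inc(\cdot, \cdot, d_{\alpha+1})$ and $\inc(\cdot, \cdot, d_{\beta+1})$ operations across all vertices is $\bO{\accuracy m}$, so the scratch counter $t$ — which advances by $2\accuracy^2 r$ on an $\inc(\Ec, v, r)$ — stays below $M = C\accuracy^3 m$ for a large enough constant $C$, confirming the normalization and ensuring all indices live in $\brac{M} = \poly(n)$.

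The main (mild) obstacle is being precise about the global bound on $t$: an $\inc(\Ec, v, 1)$ is performed for all four families on \emph{every} edge incident to $v$, contributing $\bO{\accuracy^2 m}$ total, while the large increments of size $d_{\alpha+1}$ or $d_{\beta+1}$ are performed only on hash-marked edges, contributing $\bO{\accuracy^2 \cdot n \cdot \accuracy m} = \bO{\accuracy^3 n m}$ in the worst case — but this would blow past $M$. I expect the resolution is that the relevant quantity is not the raw increment but that each $\inc$ only ever swaps in $2\accuracy^2 r$ fresh scratch states where the total $\sum r$ over large increments is $\sum_e (f_\alpha(e)+f_\beta(e)) \cdot d_{\alpha+1} $, and one must use that $d_{\alpha+1} \le (1+\eps^3) d_\alpha$ together with the fact that only edges with $\db{e}_u \in \interval{d_\alpha, d_{\alpha+1}}$ actually matter for the estimate, so effectively $d_{\alpha+1} = \bO{m}$ is too crude and instead the cleanup measurements keep the live support bounded; I would therefore lean on the state-invariant and the $M' \le M$ bound established alongside Lemma~\ref{lm:earlyterm} rather than re-deriving the increment accounting from scratch. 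Everything else is bookkeeping: a register for each of a constant number of $\poly(n)$-bounded integers is $\bO{\log n}$ qubits.
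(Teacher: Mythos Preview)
Your overall approach---enumerate the registers and show each stores a $\poly(n)$-bounded integer---matches the paper's, which dispatches the lemma in two lines by observing that the quantum state lives in a $\poly(n)$-dimensional Hilbert space and the classical bookkeeping is a constant number of $\poly(n)$-sized counters.

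Where you go astray is in step (iii). You write that the key point is ``that the counter $i$ never exceeds $\poly(n)$, i.e.\ that no basis vector with an index larger than $M$ is ever created,'' but these are \emph{not} the same condition. For the space bound you only need the first: that every index appearing in the superposition is $\poly(n)$. That is immediate---each $\inc$ adds at most $d_{\alpha+1} \le n$ to indices, and there are at most $m$ edges, so every element of $A_v, B_v, C_v, D_v$ is at most $O(mn) = \poly(n)$, hence $O(\log n)$ qubits regardless of whether $M$ is exceeded. The question of whether $t$ stays below $M$ is a \emph{correctness} concern (does the algorithm run out of scratch states?), not a space concern, and it belongs with the analysis of the estimator, not here. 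Your ``mild obstacle'' and the subsequent attempt to resolve it via the state invariant and cleanup measurements are therefore unnecessary for this lemma; the paper simply does not engage with that issue in the space proof because it does not need to.
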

\begin{proof}
The algorithm maintains a single state that requires $\bO{\log M} = \bO{\log
n}$ qubits to store, along with a constant number of counters of size
$\poly(n)$, along with some rational numbers made from constant-degree
polynomials of such numbers.
\end{proof}

\section{Quantum Algorithm for \mdcut}\label{section:max_dicut_quantum_algorithm}
We may now prove our main result.
\begin{restatable}{theorem}{mdcutalg}
\label{thm:mdcutalg}
There is a quantum streaming algorithm which $0.4844$-approximates the \mdcut{}
value of an input graph $G$ with probability $1 - \delta$. The algorithm uses
$\bO{\log^5 n \log \frac{1}{\delta}}$ qubits of space.
\end{restatable}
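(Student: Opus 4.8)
The plan is to combine the reduction of Lemma~\ref{lm:dicuttopseudobias}, which states that $\rb^\dagger \histps{G}(1^\ell - \rb) - C\eps m$ is an $(\alpha - \bO{\eps})$-approximation to $\mdcut(G)$ with probability $1 - \bO{\eps} - e^{-\bO{\eps^6\accuracy}}$, with the quantum primitive of Lemma~\ref{lm:dicut-sketch}, which estimates the pseudosnapshot restricted to edges whose head- and tail-degrees lie in a prescribed pair of intervals $\interval{d_\alpha, d_{\alpha+1}}$, $\interval{d_\beta, d_{\beta+1}}$. First I would fix $\eps$ to be a small enough absolute constant that $\alpha - \bO{\eps} \ge 0.4844$ (possible since $\alpha > 0.4844$ strictly in Lemma~\ref{lm:dicuttobias}), and then fix $\accuracy$ to be a large enough absolute constant, depending only on $\eps$, so that $e^{-\bO{\eps^6\accuracy}}$ and $e^{-\bOm{\accuracy}}$ are both below any constant threshold we need. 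Since the intervals $\interval{d_i, d_{i+1}}$ partition the possible degrees into $\bO{\eps^{-3}\log n} = \bO{\log n}$ classes, there are $\bO{\log^2 n}$ pairs $(\alpha,\beta)$; the algorithm runs, in parallel on the single edge stream, $T = \bO{\log^2 n}$ mutually independent copies of the primitive for each pair, averages the $T$ copies, and sums these averages over all pairs. Because each edge of $G$ falls into exactly one pair of degree classes, this sum is an estimate $\widehat{\histps{G}}$ of the full pseudosnapshot, and the algorithm outputs $\rb^\dagger \widehat{\histps{G}}(1^\ell - \rb) - C'\eps m$, where $m$ is counted exactly with $\bO{\log n}$ auxiliary classical bits and $C'$ is a slightly enlarged version of the constant $C$ of Lemma~\ref{lm:dicuttopseudobias}.

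For correctness I would condition on the good event of Lemma~\ref{lm:pseudobiascloseness} (probability $1 - \bO{\eps} - e^{-\bO{\eps^6\accuracy}}$) together with the preconditions $\sum_{e}(f_\alpha(e) + f_\beta(e)) \le 2\accuracy m$ of Lemma~\ref{lm:dicut-sketch}, which by a Chernoff bound and a union bound over the $\bO{\log^2 n}$ pairs all hold except with probability $\bO{\log^2 n}\,e^{-\bOm{\accuracy m}}$ --- negligible once $m = \bOm{\log\log n}$ (the case $m = \bO{\log^4 n}$ is handled separately by storing the whole stream in $\plog(n)$ space and computing $\mdcut$ exactly). On this event, Lemma~\ref{lm:bias} gives that the bias of $\widehat{\histps{G}}$, summed over pairs, is at most the number of ``bad'' edges, those violating condition~3 of Lemma~\ref{lm:dicut-sketch}; since $\tfrac{\accuracy}{2 d_\alpha}\doutbps{\dedge{uv}}_u$ is dominated by a scaled $\Bi{\db{\dedge{uv}}_u}{\accuracy/2d_\alpha}$ variable of mean $(1+\bO{\eps^3})\accuracy/2 < \accuracy$, each edge is bad with probability $e^{-\bOm{\accuracy}}$, so the expected number of bad edges is $m\,e^{-\bOm{\accuracy}} \le \eps m / \ell^2$ and, by Markov, is $\le \eps m$ except with small constant probability. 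For the fluctuations, each copy of the primitive has variance $\bO{\accuracy^{\bO{1}} m^2} = \bO{m^2}$ for constant $\accuracy$, so averaging $T$ copies and summing over the $\bO{\log^2 n}$ pairs (independent once the hash functions are fixed) gives each entry of $\widehat{\histps{G}}$ variance $\bO{m^2 \log^2 n / T}$; taking $T = \bOm{\ell^4 \log^2 n / \eps^2}$ and applying Chebyshev with a union bound over the $\ell^2 = \bO{1}$ entries yields $\abs{\rb^\dagger\paren*{\widehat{\histps{G}} - \histps{G}}(1^\ell - \rb)} = \bO{\eps m}$ with probability at least, say, $5/6$. Since random assignment gives $\mdcut(G) \ge m/4$, this additive $\bO{\eps m}$ error is an $\bO{\eps}$ multiplicative error, so combining with Lemma~\ref{lm:dicuttopseudobias} (enlarging $C$ to $C'$ to absorb the extra $\bO{\eps m}$) the output is an $(\alpha - \bO{\eps}) \ge 0.4844$-approximation to $\mdcut(G)$ with some constant probability $\ge 2/3$; running $\bO{\log\frac1\delta}$ fully independent repetitions (fresh hash functions and fresh quantum state) and taking the median boosts this to $1 - \delta$, since the set of valid approximations is an interval.

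For the space bound, Lemma~\ref{lm:dicut-sketch} gives $\bO{\log n}$ qubits per copy of the primitive, and we run $\bO{\log^2 n}$ (pairs) $\times\, \bO{\log^2 n}$ (copies per pair) $\times\, \bO{\log\frac1\delta}$ (repetitions) of them, for a total of $\bO{\log^5 n \log\frac1\delta}$ qubits; the classical side information ($m$, the scratch-state counters, post-termination degree counts, and evaluations of $g$ on stored endpoints) is $\plog(n)$ per copy and is absorbed into this bound. I expect the main work to be purely the parameter bookkeeping of the previous paragraph: verifying that $\eps$, $\accuracy$, $T$, and the repetition count can be chosen consistently so that (i) all three sources of error --- the pseudosnapshot-versus-snapshot slack of Lemma~\ref{lm:dicuttopseudobias}, the bad-edge bias of Lemma~\ref{lm:bias}, and the sampling variance --- aggregate across all $\bO{\log^2 n}$ restricted pieces to only $\bO{\eps m}$ while $T$ remains polylogarithmic (which is exactly why we want $\accuracy$ constant, so per-copy variance is $\bO{m^2}$ rather than growing with $n$), and (ii) all the failure events stay below a small constant so median-of-$\bO{\log\frac1\delta}$ succeeds. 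The genuinely hard content --- the state-invariant and measurement-outcome analysis of the primitive, and the pseudobias reduction --- is already contained in Lemmas~\ref{lm:dicut-sketch} and~\ref{lm:dicuttopseudobias}.
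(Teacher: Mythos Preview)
Your proposal is correct and follows essentially the same route as the paper's proof: fix $\eps$ and then $\accuracy$ as constants, run the primitive of Lemma~\ref{lm:dicut-sketch} for each of the $\bO{\log^2 n}$ degree-class pairs, average $\bT{\log^2 n}$ independent copies (same hash functions) to control variance, bound the bias by the expected number of ``bad'' edges via $e^{-\bOm{\accuracy}}$ and Markov, combine with Lemma~\ref{lm:dicuttopseudobias}, and finish with a median over $\bT{\log\frac{1}{\delta}}$ fresh repetitions. You are in fact slightly more careful than the paper in two places: you explicitly verify the precondition $\sum_e(f_\alpha(e)+f_\beta(e))\le 2\accuracy m$ of Lemma~\ref{lm:dicut-sketch} (the paper silently assumes it), and you note the small-$m$ regime can be handled by storing the stream outright.
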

\begin{proof}
Now, by Lemma~\ref{lm:dicuttopseudobias}, there is an $\alpha > 0.4844$, 
vector $\rb$ and constant $C > 0$ such that \[
\rb^\dagger \histps{G}\paren{1 - \rb} - C \eps m
\]
gives an $\alpha' = \alpha - \bO{\eps}$-approximation of the $\mdcut$ value of $G$ with
probability $1 - \bO{\eps} - e^{\bO{\eps^6 \kappa}}$. 

Recall that $\histps{G}$ depends on the accuracy parameters $\accuracy$ and
$\eps$ we choose. We will choose $\eps$ to be a small enough constant that
$\alpha' > 0.4844$ and the $1 - \bO{\eps}$ term in the success probability is at least
$4/5$. $\kappa$ will be chosen to depend only on $\eps$. We will choose it such
that the $1 - \bO{\eps} - e^{\bO{\eps^6 \kappa}}$ success probability is at
least $3/4$.

Now we just need to approximate $\histps{G}$. By applying
Lemma~\ref{lm:dicut-sketch} $\bO{\log^2 n}$ times in parallel (once for each
pair of degree classes $\interval{d_i,d_{i+1}}$) and summing the outputs, we
get a $\bO{\log^3 n}$ space (as $\eps$ and therefore $\kappa$ will be chosen to
be constant) algorithm that outputs an estimate of $\histps{G}$ where every
entry has variance $\bO{\kappa^6 m^2\log^2n}$, and bias
at most the number of edges $\dedge{uv}$ such that \[
\max\set*{\frac{\accuracy}{2d_{\alpha}} \doutbps{\dedge{uv}}_u + 1,
\frac{\accuracy}{2d_{\beta}} \doutbps{\dedge{uv}}_v + 1} > \accuracy
\]
where $\alpha$, $\beta$ are the unique indices in $\set{0,\dots,\floor{\log_{1
+ \eps^3} n}}$ such that $u \in \interval{d_\alpha, d_{\alpha+1}}$ and $v \in
\interval{d_{\beta}, d_{\beta+1}}$. Now for each of $(w,\gamma) = (u,\alpha),
(v,\beta)$, $\frac{\accuracy}{2d_{\gamma}}\doutbps{\dedge{uv}}_w$ is
distributed as $\text{B}(\db{\dedge{uv}}_w, \accuracy/2d_{\gamma})$, and so as
$\db{\dedge{uv}}_w \le (1 + \eps^3)d_{\gamma}$, the probability that it is at
least $\accuracy - 1$ is $2^{-\bOm{\accuracy}}$. So the expectation of the
total bias is at most $2^{-\bOm{\accuracy}}m$.

So by Markov's inequality, this bias is at most $2^{-\bOm{\accuracy}} m$ with
probability at most $1 - \eps$ over $f$ if we choose $\accuracy$ (as a function
of $\eps$) small enough. 

We may repeat this process $\bT{\kappa^6\frac{1}{\eps^3}\log^2n}$ times in
parallel (with the same hash functions each time), at $\bO{\log^5 n}$ space
cost, and pointwise average the snapshot estimates in order to obtain a
$\bO{\eps^3 m}$ variance estimate of $\histps{G}$. By Chebyshev's inequality,
with probability $1 - \eps$, every entry of this estimate is within $\eps m$ of
the corresponding entry of $\histps{G}$. Therefore, calling this estimate $M$,
\[
\rb^\dagger M\paren{1 - \rb} - D\eps m
\]
gives an $(\alpha' - \bO{\eps})$-approximation of the $\mdcut$ value of $G$
with probability $3/4-\bO{\eps}$. So choosing $\eps$ to be a small enough
constant, this gives us a 0.4844-approximation with probability $2/3$.

The result then follows by running the entire procedure $\bT{\log
\frac{1}{\delta}}$ times in parallel (sampling new hash functions each time),
and taking the median of the $\mdcut\paren{G}$ estimates.
\end{proof}

\section*{Acknowledgements}

Nadezhda Voronova thanks Mark Bun for many helpful conversations. Her research
was supported by NSF award CNS-2046425.

John Kallaugher and Ojas Parekh were supported by the Laboratory Directed
Research and Development program at Sandia National Laboratories, a
multimission laboratory managed and operated by National Technology and
Engineering Solutions of Sandia, LLC., a wholly owned subsidiary of Honeywell
International, Inc., for the U.S. Department of Energy's National Nuclear
Security Administration under contract DE-NA-0003525. Also supported by the
U.S. Department of Energy, Office of Science, Office of Advanced Scientific
Computing Research, Accelerated Research in Quantum Computing program, Fundamental Algorithmic Research for Quantum Computing (FAR-QC).  Part of this research was performed while Ojas Parekh was visiting the Institute for Pure and Applied Mathematics (IPAM), which is supported by the National Science Foundation (Grant No. DMS-1925919).

This article has been authored by an employee of National Technology \& Engineering Solutions of Sandia, LLC under Contract No. DE-NA0003525 with the U.S. Department of Energy (DOE). The employee owns all right, title and interest in and to the article and is solely responsible for its contents. The United States Government retains and the publisher, by accepting the article for publication, acknowledges that the United States Government retains a non-exclusive, paid-up, irrevocable, world-wide license to publish or reproduce the published form of this article or allow others to do so, for United States Government purposes. The DOE will provide public access to these results of federally sponsored research in accordance with the DOE Public Access Plan \url{https://www.energy.gov/downloads/doe-public-access-plan}.

\newpage
\bibliographystyle{alpha}
\bibliography{refs}

\end{document}